\DeclareMathAlphabet{\mathpzc}{OT1}{pzc}{m}{it}
\DeclareMathOperator*{\argmax}{\arg\!\max}
\newtheorem{theorem}{\textbf{\textsc{Theorem}}}
\def\endthebibliography{%
	\def\@noitemerr{\@latex@warning{Empty `thebibliography' environment}}%
	\endlist
}
\newcommand{\multiline}[1]{%
	\begin{tabularx}{\dimexpr\linewidth-\ALG@thistlm}[t]{@{}X@{}}
		#1
	\end{tabularx}
}
\begin{document}

\title{Joint Coding and Scheduling Optimization for Distributed Learning over Wireless Edge Networks}

\author{Nguyen Van Huynh, Dinh Thai Hoang, Diep N. Nguyen, and Eryk Dutkiewicz
}
\maketitle
\begin{abstract}
Unlike theoretical distributed learning (DL), DL over wireless edge networks faces the inherent dynamics/uncertainty of wireless connections and edge nodes, making DL less efficient or even inapplicable under the highly dynamic wireless edge networks (e.g., using mmW interfaces). This article addresses these problems by leveraging recent advances in coded computing and the deep dueling neural network architecture. By introducing coded structures/redundancy, a distributed learning task can be completed without waiting for straggling nodes. Unlike conventional coded computing that only optimizes the code structure, coded distributed learning over the wireless edge also requires to optimize the selection/scheduling of wireless edge nodes with heterogeneous connections, computing capability, and straggling effects. However, even neglecting the aforementioned dynamics/uncertainty, the resulting joint optimization of coding and scheduling to minimize the distributed learning time turns out to be NP-hard. To tackle this and to account for the dynamics and uncertainty of wireless connections and edge nodes, we reformulate the problem as a Markov Decision Process and then design a novel deep reinforcement learning algorithm that employs the deep dueling neural network architecture to find the jointly optimal coding scheme and the best set of edge nodes for different learning tasks without explicit information about the wireless environment and edge nodes' straggling parameters. Simulations show that the proposed framework reduces the average learning delay in wireless edge computing up to 66\% compared with other DL approaches. The jointly optimal framework in this article is also applicable to any distributed learning scheme with heterogeneous and uncertain computing nodes.
\end{abstract}

\begin{IEEEkeywords}
Coded computing, wireless edge networks, distributed learning, reinforcement learning, deep reinforcement learning, and deep dueling neural network.
\end{IEEEkeywords}

\section{Introduction}
\label{Sec:intro}
Recent years have witnessed machine learning as a key enabler to revolutionize the way we communicate, entertain, and work. This is thanks to major machine learning-based breakthroughs in various domains such as computer vision, big data, and natural language processing~\cite{chen2019deep}. However, most of machine learning applications require a massive volume of data to achieve sufficient training accuracy. With the exponential growth in both data volume and data sources, training a given learning model in a centralized manner often requires a large, if not excessive, amount of computation resources and time, especially for high-dimensional training data~\cite{wang2020convergence}. To overcome this problem, distributed learning has been introduced recently~\cite{li2018learning, huang2017deep, Chen2020Convergence,Zhu2019Broadband,Saputra2019Distributed}. In particular, a highly-complex learning task can be partitioned into multiple sub-learning tasks, and then these sub-learning tasks can be transmitted to several edge nodes for executing. In this way, the computation load at the centralized server can be offloaded to multiple edge nodes, and thus improving the training speed. As a result, distributed learning over wireless edge networks finds its applications in various emerging machine learning services that demand low delays such as autonomous vehicle, augmented reality, and virtual reality~\cite{gong2020delay}.

Although distributed learning over wireless edge networks has many advantages and applications in practice, it has been facing some technical challenges. First, it is pointed out that the performance of a distributed system is greatly affected by the straggling problem at the edge computing devices~\cite{lee2017speeding,lee2015speed,prakash2020coded}. In particular, this straggling problem can cause unpredictable computing latency due to several factors such as resource sharing, maintenance activities, power regulations, and hardware configurations~\cite{lee2017speeding}. Consequently, the computing latency is usually determined by the slowest computing edge node. In the worst case, if an edge node is highly-straggling, the learning task will stay in the system for a long time. Consequently, the computing latency of the whole system will be significantly increased. Second, the data privacy protection of the conventional distributed learning is not guaranteed as the edge nodes can derive information from the assigned sub-learning tasks. Moreover, transmitting sub-learning tasks over wireless links may lead to another security concern as an attacker can eavesdrop the transmitted data over the wireless links. These security problems are very serious as private information such as finance data and medical records can be leaked to the third party. Third, distributed learning over wireless edge networks suffers from wireless link failures. Re-transmissions can be performed for failed messages. However, this may significantly increase the training time for the system.

To overcome the aforementioned challenges, the coded computing technique~\cite{lee2017speeding} is introduced as a highly-effective solution. Specifically, the principle of the coded computing is utilizing advanced coding theoretic mechanisms to inject and leverage data/computation redundancy in order to mitigate the effects of the straggling problems as well as to protect the learning tasks' privacy at the edge nodes and over the wireless links~\cite{lee2017speeding, prakash2020coded,dhakal2019coded}. With the coded computing technique, the computation latency is now determined by a group of the fastest edge computing devices~\cite{lee2015speed, lee2017speeding}. In other words, the coded computing technique does not require all assigned edge nodes to send back their computed results as in the traditional distributed edge computing. Similarly, the effects of unstable wireless links can be mitigated as the coded computing mechanism may ignore computed results from edge nodes with unstable wireless links if it has received sufficient computed results from other edge nodes with good wireless connections. Finally, the sub-learning tasks are encoded before sending to the assigned edge nodes, resulting in a high data privacy protection. As a result, applications of coded computing have been widely adopted in distributed learning systems recently~\cite{lee2017speeding,lee2015speed,prakash2020coded, dhakal2019coded,severinson2017block, dutta2019short}. Nevertheless, existing works usually ignore the effects of wireless communications which can lead to serious degradation in the system performance. Moreover, the dynamics and uncertainty of straggling problems at edge nodes and wireless links are also not considered in the literature. In the following, we first discuss current related works using coded computing and then highlight our main contributions in this work.

\subsection{Related Work}
Recently, several works in the literature have been proposed to improve the performance of the coded computing mechanisms for distributed learning systems~\cite{lee2017speeding,severinson2017block,dutta2019short,bitar2020minimizing,tandon2017gradient,karakus2019redundancy,park2018hierarchical,haddadpour2018codes,ferdinand2018hierarchical}. In~\cite{lee2017speeding}, the authors propose a new maximum distance separable (MDS) code design for matrix multiplication which is the most common operation in machine learning algorithms. In particular, the MDS code aims to encode $k$ learning tasks into $n$ coded learning tasks, where $n \geq k$. These encoded tasks are then distributed to $n$ workers to execute. As soon as $k$ workers complete their assigned tasks and send the results to the master node, the master node can decode them to obtain the expected results. In this way, the effect of straggling workers can be significantly mitigated. The authors then demonstrate that with $n$ homogeneous workers, the MDS code can speed up the distributed matrix multiplication by a factor of $\log n$. The authors in~\cite{severinson2017block} then extend the MDS code for large-scale matrix multiplication. Specifically, the key idea is to partition a large-scale matrix into sub-matrices. Then, the MDS code is applied for each sub-matrix. Although the matrix multiplication delay is similar to that of the conventional MDS code~\cite{lee2017speeding}, thanks to shorter MDS codes, the proposed scheme can achieve a lower delay in encoding and decoding compared to that of the conventional MDS code. Similarly, the authors in~\cite{tandon2017gradient} propose a gradient coding method based on the MDS code~\cite{lee2017speeding} for the synchronous gradient descent method. By using this code, the server can obtain the final gradient of any loss function even if a number of workers do not return their gradient results. The experimental results then demonstrate that the proposed gradient coding mechanism can significantly mitigate the straggling problems at workers compared to those of the uncoded schemes. Unlike~\cite{tandon2017gradient}, the authors in~\cite{karakus2019redundancy} propose to encode the dataset with built-in data redundancy for linear regression tasks. At every training step, the missing results from straggling nodes can be compensated by using the structured computing redundancy added by the proposed coding mechanism. Experiments then show that the proposed coding mechanism can significantly reduce the system computing delay.

It is worth noting that aforementioned works and others in the literature mostly focus on optimizing coding mechanisms only. Their application to wireless edge computing is not straightforward due to the inherent uncertainty of wireless channel quality. In particular, when the wireless link between the server and an edge node is disconnected, transmitted data (i.e., sub-learning tasks sent from the server and results sent from the edge node) need to be re-transmitted. This consequently drag out the training time of the whole system. For that, the authors in~\cite{prakash2020coded} introduce an effective coded computing framework for non-linear distributed machine learning, namely CodedFedL, that adds structured coding redundancy to mitigate straggling problems in both edge nodes and wireless links. Specifically, each edge node privately generates a matrix from a probability distribution with mean $0$ and variance $1$. This matrix is then applied on the weighted local dataset to compute a local parity dataset. All local parity datasets of edge nodes are then combined at the server to obtain a global parity dataset. Gradient over the global parity dataset will be used to replace missing gradient updates from straggling edge nodes. The size of the local parity datasets is the coding redundancy. The authors then formulate an optimization problem to find the optimal amount of coding redundancy based on the the conditions of edge nodes and wireless links. Numerical experiments then show that CodedFedL can speed up the training time of federated learning by up to 15 times compared to those of other approaches.

In~\cite{kim2020coded}, the authors introduce an extension of the MDS code considering delay/latency caused by unstable wireless links. In particular, the authors point out that under dynamically changing edge environments, wireless edge nodes may not be able to complete their computations within a given deadline. Thus, the authors propose a new coding mechanism that can incorporate partially-finished computations from edge nodes into the computation recovery at the server. Similarly, the authors in~\cite{ha2019coded} aim to minimize the communication and computing delays by considering both wireless and computing impairments. In particular, the number of edge nodes for executing learning tasks is optimized based on the interference, imperfect channel state information, and straggling processors. Nevertheless, all the aforementioned solutions and others in the literature require complete environment information in advance, which may not be practical to implement. In reality, environment related parameters like link failures and straggling are dynamic and uncertain, especially wireless channel-dependent ones. They can randomly occur at both the edge nodes and wireless links due to unpredictable factors such as maintenance activities, hardware errors, random obstacles, and interference. Without considering these factors, existing solutions may not be able to achieve a highly reliable, efficient and robust performance for distributed learning over wireless edge networks. More importantly, all the current works only optimize the number of edge nodes to execute learning tasks (i.e., optimal values of $(n,k)$ code) and overlook the fact that different edge nodes may have different computing resources, wireless connections, and hardware configurations. As such, selecting the best set of nodes, instead of the number of nodes to execute learning tasks given the current status of the whole system is very critical to further improve the performance of coded distributed learning in wireless edge networks.
\subsection{Main Contributions}
Given the above, this work proposes a jointly optimal coding and scheduling framework for distributed learning over wireless edge networks. In particular, we consider a wireless edge network consisting of a mobile edge server (MEC) connected to various edge nodes with different hardware configurations via different wireless links. When a learning task arrives at the MEC server, it will be encoded into sub-learning tasks by using an MDS-based code\footnote{Note that our proposed solution can not only apply for the MDS code proposed in~\cite{lee2017speeding} but also can apply for other codes. In particular, most of the coded computing techniques aim to optimize the amount of coding redundancy, e.g., \cite{prakash2020coded}, which is similar to $k$ in the MDS code. Therefore, our proposed solution can be straightforwardly extended to other coding techniques.}. Then, these sub-learning tasks are sent to a set of $n$ selected edge nodes to execute. When a predefined number of edge nodes (i.e., $k$ where $k \leq n$) complete their assigned sub-learning tasks, their results can be aggregated to obtain the final result of the original learning task. However, finding an optimal MDS code (i.e., a pair of $n$ and $k$) \emph{and} the best edge nodes (referred to as the optimal scheduling) for each learning task under the dynamic of edge nodes (e.g., available or unavailable) and wireless environment (e.g., good or bad channel condition) is a challenging problem. Solving such a problem in practice is even more difficult as one also needs to account for the uncertainty of wireless links and edge nodes. This is the unpredictable failures or straggling links/devices. To the best of our knowledge, all current works cannot effectively address all these problems.

To tackle the above problem, we first develop a Markov decision process (MDP) framework to capture the aforementioned dynamics and uncertainty of the system such as diverse learning tasks, computing resources, straggling issues at different edge nodes, and wireless channel conditions. To minimize the communication and computing delays, one can rely on the Q-learning algorithm to obtain the optimal coding and edge node scheduling policy. The key idea of this algorithm is learning through interactions with the environment and gradually finds the optimal policy. Nevertheless, the Q-learning algorithm usually takes a long time to converge to the optimal policy, especially for distributed learning systems which usually involve with high-dimensional state and action spaces. Moreover, if the state space is continuous, the conventional Q-learning algorithm may not be able to effectively address the dynamic optimization problem. Therefore, we propose a highly-effective deep reinforcement learning algorithm based on the idea of using the deep dueling neural network architecture~\cite{wang2016dueling} to facilitate the learning process of the distributed learning system. In particular, as the Q-function of each state-action pair is estimated by the deep dueling neural network instead of Q-table as in the conventional Q-learning algorithm, our proposed algorithm can effectively handle the continuous state space. Moreover, different from conventional deep reinforcement learning approaches, this proposed algorithm separately estimates the advantage and value functions for each state-action pair with two streams of hidden layers in the deep dueling neural network architecture~\cite{wang2016dueling}. These two functions are then combined at the output layer to derive the optimal action, i.e., coding and scheduling policy. In this way, the learning process is significantly improved and stable as the unnecessary relations between the values of states and the advantages of corresponding actions are mitigated. For example, selecting MDS codes with high values of $n$, i.e., processing learning tasks in many edge nodes, only benefit when learning task sizes are large. Extensive simulation results show that the proposed solution can jointly obtain the optimal code and the best edge nodes to perform learning tasks given the uncertainty and dynamic of wireless channels and straggling computing at edge nodes. Under the optimal policy, the average latency for learning tasks can be reduced by 66\% compared to those of the conventional coded distributed learning methods. The major contributions of this paper are highlighted as follows:

\begin{itemize}
	\item Propose a highly effective distributed learning framework leveraging outstanding advantages of coded computing as well as abundant computing resources from multiple collaborative edge nodes to securely and effectively execute learning tasks.
	
	\item Propose a jointly optimal coding and scheduling framework for distributed learning over wireless edge networks. Under this framework, one can simultaneously select the optimal code as well as the optimal edge nodes for each learning task given the uncertainty of the edge nodes and wireless links. To the best of our knowledge, our paper is the first work which can jointly optimize both coding and edge node scheduling for coded computing.
	
	\item Develop a highly-effective deep reinforcement learning algorithm for coded computing over wireless edge networks by utilizing the advanced deep dueling neural network architecture~\cite{wang2016dueling} to address the slow-convergence and non-discrete problems of conventional reinforcement learning algorithms (e.g., Q-learning and deep Q-learning algorithms). By separately estimating the advantage and value functions, unnecessary relations between the values of states and the advantages of corresponding actions are mitigated, resulting in a high learning rate. This feature is especially useful as the sever needs not only optimizing the code, but also selecting the best edge nodes to execute learning tasks at the same time.
	
	\item Perform extensive simulations to show the efficiency of our proposed solution compared to those of the conventional approaches (e.g., \cite{lee2017speeding}). Moreover, we discuss and analyze various scenarios to provide insightful designs for distributed learning over wireless edge networks with the coded computing mechanism.
\end{itemize}

The rest of this paper is organized as follows. Section~\ref{Sec.System} presents the system model and the computing and communication models. The MDP framework and the problem formulation are provided in Section~\ref{Sec:prob}. Section~\ref{sec:QDeepQ} presents the deep dueling algorithm in details. Simulation results are discussed in Section~\ref{sec:evaluation}. Finally, conclusions are highlighted in Section~\ref{sec:conclusion}.

\section{System Model}
\label{Sec.System}
\begin{figure}[!]
	\centering
	\includegraphics[scale=0.43]{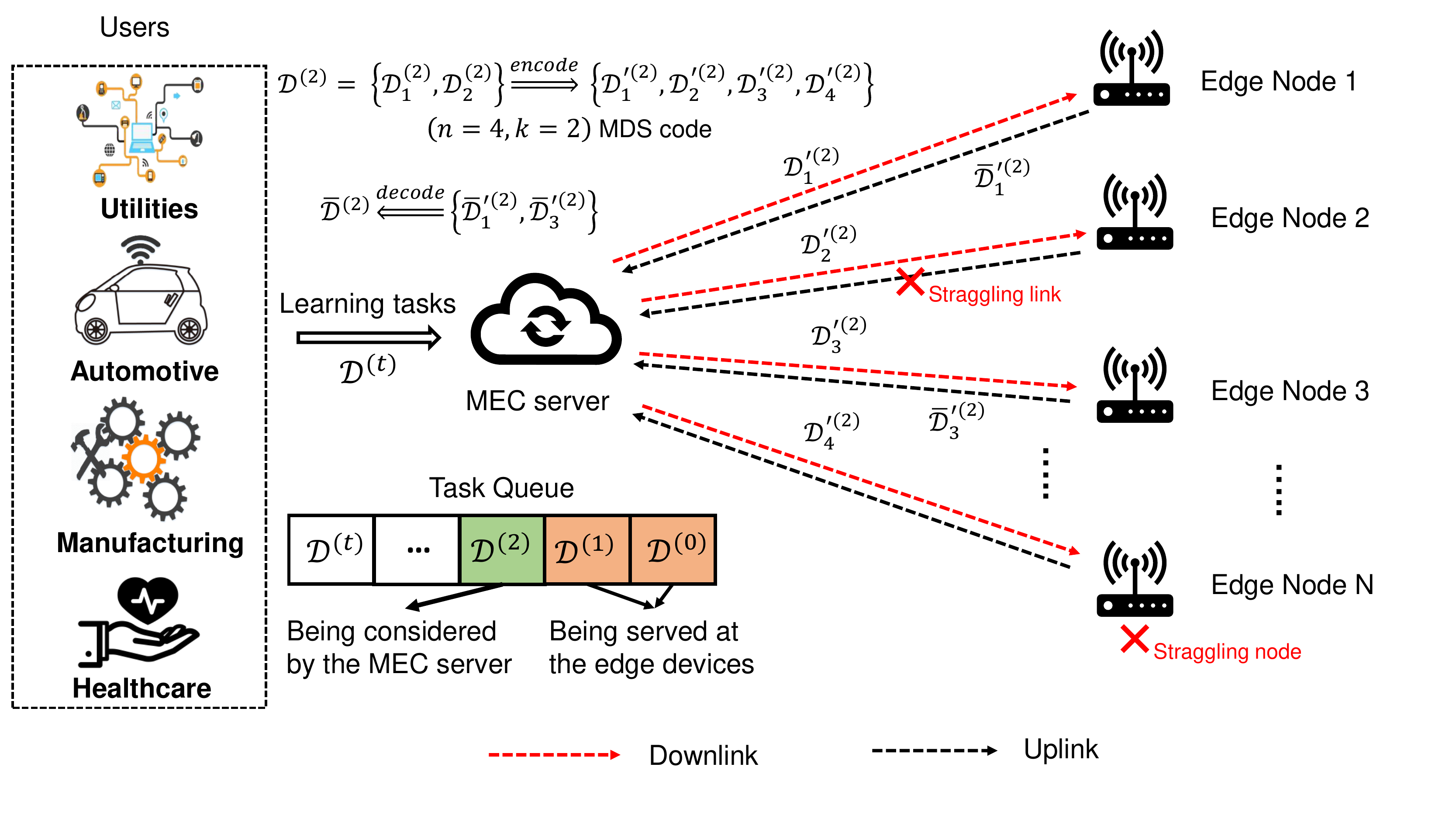}
	\caption{System model for coded distributed learning over wireless edge network. Here, we illustrate the case when learning task $\mathcal{D}^{(2)}$ is processed with $(n=4, k=2)$ MDS code. The sub-learning tasks are sent to edge nodes $1,2,3,\mbox{ and }N$ to process. Then, when edge node $2$ is disconnected, and edge node $N$ is straggling, the learning task $\mathcal{D}^{(2)}$ still can be completed by using computed results from edge nodes $1$ and $3$.}
	\label{Fig.system_model}
\end{figure}

We consider a distributed edge learning system that consists of a mobile edge computing (MEC) server and a set of $N$ edge nodes denoted by $\mathbf{E} = \{E_1, \ldots, E_j, \ldots, E_N\}$. The edge nodes communicate with the MEC through wireless links as illustrated in Fig.~\ref{Fig.system_model}. Let $C_j$ denote the wireless link that connects the MEC server and edge node $E_j$. Practically, different links may be allocated on different channels, and thus their properties, e.g., fading conditions, interference, and disconnection probability, may be different. At the MEC server, learning tasks that arrive to the system are stored in a queue of size $M$. In this paper, we assume that time is slotted. In each time slot, a learning task arrives at the system with probability $\mu$. Note that, our proposed solution still can work well with other packet arrival processes as they will be learned by the proposed algorithm and is not required to be available in advance. At time slot $t$, if a learning task $\mathcal{D}^{(t)}$ arrives at the system and the queue is not full, the learning task will be stored in the queue. Otherwise, the learning task will be dropped. Different learning tasks, e.g., matrix multiplication, data shuffling, or gradient descent for linear regression problems~\cite{lee2017speeding}, may have different data sizes. We denote $f(\mathcal{D}^{(t)})$ as the data size of learning task $\mathcal{D}^{(t)}$.

In our system, learning tasks in the queue are served in a first-come-first-served manner. At each time slot, if the computing resources at the edge nodes are available, the MEC server will look at the queue and consider to serve a learning task which comes earliest in the queue but not yet served by any edge nodes (e.g., $\mathcal{D}^{(2)}$ as illustrated in Fig.~\ref{Fig.system_model}). By using the optimal $(n,k)$ MDS code and the optimal set of edge nodes obtained by our proposed algorithm, this learning task is then encoded into $n$ sub-learning tasks, and these sub-learning tasks are offloaded to edge nodes in the optimal set to execute. These devices then serve the assigned sub-learning tasks and return the results to the MEC server. Note that the learning task still remains in the queue until the MEC server receives $k$ results returned from the edge nodes and successfully decodes them. For example, as illustrated in Fig.~\ref{Fig.system_model}, with $(n=4, k=2)$ MDS code, the MEC server does not need to wait for results from edge node 2 and edge node $N$, which are delayed by the straggling problems. Instead, the MEC server can decode the final result by using computed results returned from edge node 1 and edge node 3 which have better wireless connections and computing power at the considered running time. In contrast, conventional distributed learning models need to wait computed results from all the assigned edge nodes to obtain the final result, and thus dramatically increasing the computing delay of the whole system.

For the ease of notation, we assume that when an edge node receives a sub-learning task, it will use its all computing resource to execute this task. This is also stemmed from the fact that edge nodes (e.g., IoT gateways) are usually equipped with limited resources, and thus they may not be able to serve multiple learning tasks simultaneously. In addition, if an edge node has to process multiple sub-learning tasks at the same time, the straggling problem may be more serious as its computing resources have to share to execute multiple tasks simultaneously. We denote $e_j$ as the state of edge node $E_j$. Specifically, $e_j = 0$ if the edge node is currently busy, i.e., serving one sub-learning task. $e_j = 1$ if the edge node is available, i.e., there is no learning task executing at the edge node. Then, the set of available edge nodes can be denoted as $\mathbf{E}_\mathrm{av} \overset{\text{def}}{=} \{E_j : \forall E_j \in \mathbf{E}$ and $e_j = 1$\}. It is worth noting that our proposed solution can be extended to the case if one edge node can handle multiple tasks at the same time by implementing multiple virtual machines (VMs). Then, each VM can be reserved to execute one learning task. Thus, the MEC just needs to take the available VMs of each edge node into account when it assigns learning tasks to them.
\subsection{Coded Computing for Distributed Learning over Wireless Edge Networks}
The key idea of coded computing techniques is to leverage coding theoretic mechanisms to add structured computing redundancy into learning tasks to mitigate the effects of straggling edge nodes and wireless communication links~\cite{prakash2020coded}. Specifically, the coded computing technique first encodes a learning task into several sub-learning tasks and then sends them to the edge nodes to perform. After some of the edge nodes complete the assigned sub-learning tasks, they will send the results (e.g., trained models or multiplicated matrices) to the MEC server. Here, it is important to note that, the MEC server does not need to wait all sub-learning tasks' results to obtain the final result thanks to outstanding features of the coded computing technique. In this way, the time delay can be significantly reduced, especially in cases if there are some straggling issues occurring at edge nodes and/or at the wireless communication links.

One of the most effective coding techniques used in coded computing is the maximum distance separable (MDS) code~\cite{lee2017speeding}. The fundamentals of the MDS code are illustrated in Fig.~\ref{Fig.system_model}. In particular, with the $(n,k)$ MDS code ($1 \leq k \leq n$), a learning task $\mathcal{D}^{(t)}$ can be first divided into $k$ equal-sized sub-learning tasks $\{\mathcal{D}_1^{(t)}, \mathcal{D}_2^{(t)}, \ldots, \mathcal{D}_k^{(t)}\}$. Then, these sub-learning tasks are encoded by the $(n,k)$ MDS code. After encoding, we get $n$ encoded sub-learning tasks $\{\mathcal{D}_1^{'(t)}, \mathcal{D}_2^{'(t)}, \ldots, \mathcal{D}_n^{'(t)}\}$. These sub-learning tasks are then sent to $n$ edge nodes to execute. Upon receiving any $k$ results from any $k$ edge nodes, the MEC server can decode them to obtain the result. When a learning task is completed, it will be removed from the task queue. After that, the MEC server will inform edge nodes that are still working on the remaining sub-learning tasks to stop performing these sub-learning tasks and make them free.

%

It is worth noting that choosing the values of $n$ and $k$ to maximize the system performance in terms of serving time, delay, and task drop probability is very challenging under the dynamics and uncertainty of the wireless environment as well as straggling problems at the edge computing devices. Currently, an optimal MDS code setting (with optimal values of $n$ and $k$) can be determined based on static optimization methods, e.g., \cite{severinson2017block, tandon2017gradient,prakash2020coded,reisizadeh2019coded}. However, these methods require prior information about the straggling parameters at edge nodes and wireless links. In practice, these parameters usually are not available in advance. Thus, they are not applicable to wireless edge computing as they do not account for the inherent dynamics of wireless channels and edge nodes, leading to uncertainty of straggling problems. Moreover, it is even more challenging when choosing the best edge nodes to execute different learning tasks. To the best of our knowledge, all current existing works cannot address all these problems. Thus, in this work, we propose an intelligent approach which allows the MEC server to dynamically select the optimal MDS code together with the best edge nodes based on current status of the whole system. Note that this approach can not only select the optimal values of $n$ and $k$, but also find the best edge nodes to serve each learning task.
\subsection{Communication and Computation Models}
\label{subsec:timemodel}
Recall that with $n,k$ MDS code, learning task $\mathcal{D}^{(t)}$ is first divided into $k$ equal-sized sub-learning tasks $\{\mathcal{D}_1^{(t)}, \mathcal{D}_2^{(t)}, \ldots, \mathcal{D}_k^{(t)}\}$. These sub-learning tasks are then encoded into $n$ encoded sub-learning tasks $\{\mathcal{D}_1^{'(t)}, \mathcal{D}_2^{'(t)}, \ldots, \mathcal{D}_n^{'(t)}\}$. The encoded sub-learning tasks are finally sent to $n$ edge nodes for processing. In this section, we formulate the serving time for encoded sub-learning task $\mathcal{D}_i^{'(t)}$ ($1 \leq i \leq n$) at a given edge node $E_j \in \mathbf{E}$. For the ease of notation, we denote $T_{\mathrm{serve}}^{(t,i)}$ as the total serving time of sub-learning task $\mathcal{D}_i^{'(t)}$. Thus, $T_{\mathrm{serve}}^{(t,i)}$ can be written as:
\begin{equation}
	\label{eq:time_serve}
	T_{\mathrm{serve}}^{(t,i)} = T_{\mathrm{se}}^{(t,i)} + T_{\mathrm{cmp}}^{(t,i)} + T_{\mathrm{es}}^{(t,i)},
\end{equation}
where $T_{\mathrm{se}}^{(t,i)}$ and $T_{\mathrm{es}}^{(t,i)}$ are the communication time for sending $\mathcal{D}_i^{'(t)}$ from the MEC server to edge node $E_j$ and the time for sending back its computed result from edge node $E_j$ to the MEC server through wireless link $C_j$, respectively. Note that both the uplink (from edge node $E_j$ to the MEC server) and the downlink (from MEC server to edge node $E_j$) can share the same channel as the MEC server and edge node $E_j$ do not need to transmit data at the same time. $T_{\mathrm{cmp}}^{(t,i)}$ is the time that edge node $E_j$ requires to complete the sub-learning task. With the high-speed backhaul connections from the edge node to the server (e.g., via mmWave), a sub-learning task or its result can be transmitted over the wireless link within one time slot. To capture the dynamics of the wireless link $C_j$ from the edge node $E_j$ to the server, e.g., due to fading or moving obstacles and/or interfere with surrounding RF signals, let's define $p_j$ as the disconnection probability of the wireless link from the MEC server to edge node $E_j$ over a given time slot. We then denote $\mathbf{p}=\{p_1, \ldots, p_j, \ldots, p_N\}$ as the set of disconnection probabilities corresponding to wireless channels $\{C_1, \ldots, C_j, \ldots, C_N\}$. Note that our proposed solution does not require these probabilities in advance (unlike most of existing works~\cite{ha2019coded,Li2020Multi} using coded computing). Instead, it learns these probabilities by interacting with the environment.

In the case wireless link $C_j$ is disconnected, the MEC server or edge node $E_j$ needs to resend its data in the next time slot. As such, the communication delay increases, especially when the disconnection probability is high. We thus can formulate $T_{\mathrm{se}}^{(t,i)}$ and $T_{\mathrm{es}}^{(t,i)}$ as follows:
\begin{equation}
	\label{eq:commu_time}
	T_{\mathrm{se}}^{(t,i)} = T_{\mathrm{es}}^{(t,i)} = H_j\xi,
\end{equation}
where $\xi$ is the duration of a time slot and $H_j$ is the number of time slots needed to successfully transmit data on wireless channel $C_j$. $H_j$ is i.i.d based on the \textit{Geometric} distribution with the successful probability $p_\mathrm{success} = 1- p_j$ as follows~\cite{prakash2020coded}:
\begin{equation}
	Pr(H_j = x) = p_j^{x-1}(1-p_j), x = 1, 2, 3,\ldots
\end{equation}
According to the \textit{Geometric} distribution, a higher value of $p_j$, i.e., disconnection probability, results in a higher value of $H_j$. Thus, in many scenarios, the delay caused by unstable connections is even more serious than that of straggling devices, especially when the link disconnection probability is very high~\cite{Wu2020Latency}. To deal with this issue, in the sequel, we propose an effective learning solution that can learn the disconnection probabilities of all wireless links to avoid bad connections when serving learning tasks (e.g., assigning tasks to edge nodes with more favorable connections). Hence, the straggling effects caused by unstable links can be significantly mitigated.

The computing time $T_{\mathrm{cmp}}^{(t,i)}$ of sub-learning task $\mathcal{D}_i^{'(t)}$ ($1 \leq i \leq n$) at edge node $E_j$ is the sum of the stochastic time for random memory access during read/write cycles and the deterministic time for processing data~\cite{lee2017speeding,prakash2020coded,zhang2019model}. Thus, $T_{\mathrm{cmp}}^{(t,i)}$ can be written as follows:
\begin{equation}
	\label{eq:compu_time}
	T_{\mathrm{cmp}}^{(t,i)} = \underbrace{f(\lambda_j)}_{\text{stochastic time}} + \underbrace{\eta_j |\mathcal{D}_i^{'(t)}|}_{\text{deterministic time}},
\end{equation}
where $f(\lambda_j)$ is a random variable denoting the stochastic component of the computing time caused by the straggling problem at edge node $E_j$. $f(\lambda_j)$ follows an exponential distribution with rate $\lambda_j$~\cite{zhang2019model, prakash2020coded}, i.e., $p_{f(\lambda_j)}(x) = \lambda_je^{-\lambda_jx}, x \geq 0$. $\eta_j$ is the deterministic time for edge node $E_j$ to process one data point (e.g., one row in the matrix). $|\mathcal{D}_i^{'(t)}|$ is the data size of sub-learning task $\mathcal{D}_i^{'(t)}$. We denote $\bm{\lambda}=\{\lambda_1, \ldots, \lambda_j, \ldots, \lambda_N\}$ as the set of rate parameters determining the stochastic time at edge nodes. Specifically, $\frac{1}{\lambda_j}$ is the average stochastic time that can be considered as straggling parameter of edge node $E_j$. The straggling parameter depends on many factors such as shared resources, maintenance activities, power limitation, and random memory access~\cite{dutta2019short, kim2020coded}. With high straggling parameters, edge nodes need more time for processing learning tasks. Therefore, avoiding straggling edge nodes is crucial in distributed learning as they can significantly slow down the learning process. In practice, the straggling problems at edge nodes may occur randomly and cannot be effectively predicted. To tackle this problem, our framework below aims to learn the straggling parameters of edge nodes to find the best edge nodes for each learning task, and thus remarkably mitigating the impacts of straggling devices.
\subsection{Learning-Task Delay Minimization Problem}
From (\ref{eq:time_serve}), (\ref{eq:commu_time}), and (\ref{eq:compu_time}), the total serving time of a sub-learning task $\mathcal{D}_i^{'(t)}$ can be expressed as:
\begin{equation}
	\label{eq:serving_time_subtask}
	T_{\mathrm{serve}}^{(t,i)} = \Big(2H_j\xi + f(\lambda_j) + \eta_j |\mathcal{D}_i^{'(t)}|\Big) c_{i,j}, \forall i \in \{1, \ldots, n\}, \forall j \in \{1,2, \ldots, N\}
\end{equation}
where $c_{i,j}$ is a scheduling binary decision. $c_{i,j} = 1$ if sub-learning task $\mathcal{D}_i^{'(t)}$ is served by edge node $E_j$, and $c_{i,j} =0$, otherwise. Note that each sub-learning task is only severed by one edge node, and thus $\sum_{j=1}^{N}c_{i,j} = 1, \forall i \in \{1, \ldots, n\}$. With the $(n,k)$ MDS code, the MEC server only needs $k$ results from any $k$ (out of $n$) edge nodes to successfully decode the result for learning task $\mathcal{D}^{(t)}$. Thus, the total serving time for a learning task $\mathcal{D}^{(t)}$ is the serving time of the $k$-th completed sub-learning task, which can be expressed as follows:
\begin{equation}
	\label{eq:serving_time_task}
	T_{\mathrm{serve}}^{(t)} = \min_{k-th}\Bigg(\Bigl\{T_{\mathrm{serve}}^{(t,i)}: \forall i \in \{1, 2, \ldots, n\}\Bigr\}\Bigg),
\end{equation}
where $\displaystyle\min_{k-th}(.)$ returns the $k$-th minimum value of a set. For example, $\displaystyle\min_{3-th}\Big(\{1, 5, 10, 4, 6\}\Big) = 5$. We then can formulate the serving time minimization problem for a learning task $\mathcal{D}^{(t)}$ as follows:
\begin{eqnarray}
	\label{eq:minimization}
	\min_{n, k, \{c_{i,j}\}}	& &	T_{\mathrm{serve}}^{(t)},	\\
	\mbox{s.t.}			& & 1 \leq k \leq n, \forall n \in \{1, 2, \ldots, |\mathbf{E}_\mathrm{av}|\},\nonumber\\
	& & c_{i,j} \in \{0,1\}, \forall i \in \{1, 2, \ldots, n\} \mbox{ and } \forall j \in \{1, 2, \ldots, N\},\nonumber\\
	&& \sum_{j=1}^{N}c_{i,j} = 1, \forall i \in \{1, \ldots, n\} \mbox{ and } \forall j \in \{1, \ldots, N\},\nonumber\\
	&& c_{i,j} = 1, \mbox{ if } e_j = 1, \forall j \in \{1, \ldots, N\}.\nonumber
\end{eqnarray}
In Theorem~\ref{theo:nphard}, we show that the delay minimization problem in~(\ref{eq:minimization}) is an NP-hard problem.

\begin{theorem}
	\label{theo:nphard}
	The joint coding and scheduling optimization problem~(\ref{eq:minimization}) is NP-hard.
\end{theorem}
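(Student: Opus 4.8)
The plan is to prove NP-hardness by a polynomial-time reduction from a classical strongly NP-hard problem, rather than by attacking problem~(\ref{eq:minimization}) head-on. The most natural candidate is minimum-makespan scheduling on parallel machines (and, for strong NP-hardness with polynomially bounded data, the \textsc{3-Partition} problem), since the per-node serving time in~(\ref{eq:serving_time_subtask}) already has the ``processing-time on a machine'' flavour and the decision variables $\{c_{i,j}\}$ are exactly a job-to-machine assignment. I would first pass to the decision version of~(\ref{eq:minimization}): given a deadline $T$, decide whether there exist a code $(n,k)$ and a feasible assignment $\{c_{i,j}\}$ with $T_{\mathrm{serve}}^{(t)}\le T$. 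Because the optimal value of~(\ref{eq:minimization}) can be located by a polynomial number of such queries (binary search over the finitely many achievable values of $T_{\mathrm{serve}}^{(t)}$), hardness of the decision problem immediately yields hardness of the optimization problem, so it suffices to reduce an NP-complete language to this decision version.

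Next I would build the reduction so that the coding degrees of freedom are ``frozen'' and the combinatorial core is exposed. Concretely, given an instance of the source problem I would instantiate the edge nodes $\mathbf{E}_\mathrm{av}$ and their parameters from the numbers in the source instance, identifying the realized quantities $2H_j\xi+f(\lambda_j)$ and the speed $\eta_j$ with the source data, and force the code choice so that the objective~(\ref{eq:serving_time_task}) collapses to a single, analytically tractable statistic. The cleanest choice is to drive the construction toward $k=n$, for which $\min_{k-th}(\cdot)$ in~(\ref{eq:serving_time_task}) becomes the maximum, i.e. the makespan of the selected assignment; the deadline $T$ then plays the role of the target makespan. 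I would then argue the two directions of the equivalence: a ``yes'' certificate of the source instance maps to an assignment meeting deadline $T$, and conversely any assignment with $T_{\mathrm{serve}}^{(t)}\le T$ decodes back to a valid solution of the source instance. Verifying that this mapping is computable in polynomial time, and that all generated parameters are polynomially bounded (needed if the reduction is from \textsc{3-Partition} to obtain \emph{strong} NP-hardness), completes the argument.

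I expect the main obstacle to be neutralising the two features that make~(\ref{eq:minimization}) different from a textbook scheduling instance: the order-statistic objective $\min_{k-th}$ and the coupling between the code rate and the per-task workload, since dividing $\mathcal{D}^{(t)}$ into $k$ equal pieces makes each sub-task size $|\mathcal{D}_i^{'(t)}|$ scale as $1/k$. Because a fixed code makes all $n$ sub-tasks homogeneous, the heterogeneity that the reduction must exploit can only enter through the node parameters and through the joint choice of $(n,k)$; the crux of the construction is therefore to design node parameters and a gadget for $(n,k)$ so that selecting the ``right'' code and node subset is provably equivalent to solving the embedded partition/assignment constraint, while wrong choices are penalised past the deadline $T$. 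Getting this gadget to simultaneously handle the $1/k$ scaling and the $k$-th-order-statistic aggregation---so that the only way to meet $T$ is the intended combinatorial solution---is the delicate step; once it is in place, the forward and backward implications and the polynomiality checks are routine.
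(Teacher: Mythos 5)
Your proposal is a plan, not a proof: the entire burden of the argument is concentrated in the gadget that you defer to the end (``design node parameters and a gadget for $(n,k)$ so that selecting the right code and node subset is provably equivalent to the embedded partition constraint''), and that gadget is never constructed. Worse, the obstacle you yourself identify appears to be fatal for problem~(\ref{eq:minimization}) as literally written. Because the $(n,k)$ MDS code splits $\mathcal{D}^{(t)}$ into \emph{equal-sized} pieces, all $n$ sub-tasks are interchangeable, so for a fixed $k$ the per-node completion time is a single number $T_j(k)=2H_j\xi+f(\lambda_j)+\eta_j f/k$ that does not depend on \emph{which} sub-task node $j$ receives; the assignment variables $\{c_{i,j}\}$ carry no combinatorial content beyond choosing a subset of nodes. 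Since the $k$-th order statistic over a superset is never larger than over a subset, taking $n=|\mathbf{E}_\mathrm{av}|$ is optimal for each $k$, and one can then enumerate $k\in\{1,\dots,|\mathbf{E}_\mathrm{av}|\}$ and read off the $k$-th smallest $T_j(k)$ in polynomial time. So the deterministic single-task decision problem you propose to reduce to looks polynomial-time solvable, and no reduction from makespan scheduling or \textsc{3-Partition} of the shape you describe can exist; any genuine hardness would have to come from features outside~(\ref{eq:minimization}) (multiple tasks competing for nodes, or the stochastic realizations), which your reduction does not touch.

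For what it is worth, the paper's own proof is also not a valid NP-hardness argument, and it fails in a different way: it asserts that (\ref{eq:minimization}) ``is a form of the Knapsack problem'' and is ``more complicated,'' then concludes hardness. That is the wrong direction of reduction --- showing that your problem resembles or generalizes Knapsack establishes nothing unless you exhibit arbitrary Knapsack instances \emph{as} instances of~(\ref{eq:minimization}), which the paper does not do (and, per the observation above, plausibly cannot do). Your instinct --- pass to the decision version and reduce a known NP-hard problem \emph{to} it --- is the methodologically correct one and is sounder than the paper's; but until the gadget is exhibited and the equal-sub-task-size degeneracy is overcome (e.g., by reformulating the problem with heterogeneous sub-task sizes or multiple simultaneous tasks), neither your argument nor the paper's establishes the theorem.
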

\begin{proof}
	It can be observed that the optimization problem in (\ref{eq:minimization}) is a form of the Knapsack problem~\cite{Knapsack}. In particular, (\ref{eq:minimization}) aims to find the optimal MDS code (i.e., optimal values of $n$ and $k$) and the optimal scheduling policy (i.e., the best set of $\{c_{i,j}\}$) to minimize the serving time for each learning task. It is worth noting that the problem in (\ref{eq:minimization}) is much more complicated than the Knapsack problem as the serving time of each edge node is a stochastic value and changes over time as shown in (\ref{eq:serving_time_subtask}). As a result, solving (\ref{eq:minimization}) is more difficult than solving the Knapsack problem. As shown in~\cite{Knapsack}, the Knapsack problem is an NP-hard problem. As such, the optimization problem in~(\ref{eq:minimization}) is also an NP-hard problem.
\end{proof}

As shown in Theorem~\ref{theo:nphard}, minimizing the serving time of each learning task is NP-hard, even if the MEC server knows the environment parameters such as $p_j$, $\lambda_j$, and $\eta_j$ in advance. However, in practice, these parameters are usually not available in advance due to the dynamics and uncertainty of edge nodes and wireless links. Moreover, this paper aims to minimizing the average delay for all learning tasks, which is much more challenging than that for a single learning task as in~(\ref{eq:minimization}). The reason is that learning tasks are sharing the same computing resources from edge nodes, and thus the optimal coding and scheduling for a learning task will have significant impacts on all next arrival learning tasks. Thus, all current static optimization tools in existing works in the literature~\cite{lee2015speed, lee2017speeding, prakash2020coded} may not be effective in dealing with these practical issues. To tackle this and to account for the dynamics and uncertainty of wireless connections and edge nodes, we reformulate the problem as a Markov decision process and then design a novel deep reinforcement learning algorithm that employs the deep dueling neural network architecture to find the jointly optimal coding and scheduling policy for different learning tasks without explicit information about the wireless environment and edge nodes' straggling parameters.
\section{Coded Computing for Distributed Learning Formulation}
\label{Sec:prob}
We first adopt the Markov decision process (MDP) framework to formulate the system delay minimization problem for distributed learning over wireless edge networks. Generally, the MDP is defined by a tuple $<\mathcal{S}, \mathcal{A}, r>$ where $\mathcal{S}$ is the state space, $\mathcal{A}$ is the action space, and $r$ is the immediate reward function of the system. With the MDP framework, the MEC server can dynamically make optimal actions, i.e., select optimal MDS codes as well as the best edge nodes for serving sub-learning tasks, based on its current states, e.g., task queue and available edge nodes' resources, to maximize its long-term average reward. Thus, this framework can significantly reduce the average delay of learning tasks.
\subsection{State Space}
As stated above, a learning task stored in the task queue at the MEC server will be severed in the first-come-first-served manner. To serve a learning task, the MEC first selects $n$ available edge nodes and chooses $(n,k)$ MDS code to encode the learning task. After that, the encoded sub-learning tasks are transmitted to a set of optimal edge nodes. The learning task will not be removed from the queue until the MEC server receives any $k$ results from the edge nodes. Thus, the queue size, the available edge nodes, and the size of the considered learning task are important factors that should be captured by the system state $s$. We then define the state space $\mathcal{S}$ of the system as follows:
\begin{equation}
\label{eq:state}
\begin{aligned}
\mathcal{S} \triangleq \Big\{\!\!(m, f, \{e_1,\ldots, e_j, \ldots, e_N\}): m \in \{0, \ldots, M\}; f \geq 0; e_j \in \{0,1\}, \forall j \in \{1, \ldots, N\}\!\!\Big\},
\end{aligned}
\end{equation}
where $m$ represents the number of learning tasks currently waiting in the queue, $f$ is the size of the current considered learning task. Note that the current task size equals 0 only when: (i) the task queue is empty or (ii) all current tasks in the queue are being served and there is no new task arriving. The system state is then defined as a composite variable $s = (q,f,\{e_1, \ldots, e_j, \ldots, e_N\}) \in \mathcal{S}$. Note that the environment parameters such as straggling parameters of edge nodes and wireless links as well as the channel quality cannot obtain in advance as discussed in the previous sections. Thus, our system state space does not take these parameters into account. Instead, these parameters are implicitly captured in the immediate function defined below and then learned by our proposed learning algorithm to simultaneously obtain the optimal coding and scheduling policy.

\subsection{Action Space}
As mentioned, most of current works only focus on optimizing the optimal code, i.e., the optimal values of $n$ and $k$~\cite{lee2017speeding}. However, the straggling problems at wireless links and edge nodes are randomly and unpredictable. As such, optimizing only the values of $n$ and $k$ often leads to sub-optimal solutions in terms of the system delay. To address this problem, our work aims to find not only the optimal code but also the best set of edge nodes to serve a learning task based on the current system state $s$. Thus, an action is combination of $n$, $k$, and the set of edge nodes to serve the current learning task. At state $s$ defined in (\ref{eq:state}), denote $\mathbf{E}_\mathrm{av}$ as the set of all available edge nodes ($\mathbf{E}_\mathrm{av} \subseteq \mathbf{E}$), we have the action space of the system as follows:
\begin{equation}
	\mathcal{A} \triangleq \{a_s\} = \{(0,0,\emptyset), (n,k,\mathbf{E}_\mathrm{b})\}, \forall n \in \{1, \ldots, |\mathbf{E}_\mathrm{av}|\}, \forall k \in \{1, \ldots, n\}, \forall \mathbf{E}_\mathrm{b} \in \binom{\mathbf{E}_\mathrm{av}}{n},
\end{equation}
where $a_s$ is the action selected at state $s$ and $\mathbf{E}_\mathrm{b}$ is the set of $n$-best edge nodes to serve the current learning task. $|\mathbf{E}_\mathrm{av}|$ is the total number of available edge nodes at state $s$. $\binom{\mathbf{E}_\mathrm{av}}{n}$ is the combination operation that returns all size-$n$ subsets of $\mathbf{E}_\mathrm{av}$. For example, if $\mathbf{E}_\mathrm{av} = \{E_1, E_2, E_3\}$ and $n=2$, we have $\binom{\mathbf{E}_\mathrm{av}}{2} = \Big\{\{E_1, E_2\}, \{E_1, E_3\}, \{E_2, E_3\}\Big\}$. From this set, the MEC server can select any size-$2$ subset of edge nodes to serve the learning task, i.e., $\mathbf{E}_\mathrm{b} = \{E_1, E_2\} \mbox{, } \mathbf{E}_\mathrm{b} = \{E_1, E_3\} \mbox{ or } \mathbf{E}_\mathrm{b} = \{E_2, E_3\}$. In general, $a_s = (n,k,\mathbf{E}_\mathrm{b})$ if the MEC server chooses $(n,k)$ MDS code to encode the learning task and the edge nodes in set $\mathbf{E}_\mathrm{b}$ to execute the encoded sub-learning tasks. $a_s = (0,0, \emptyset)$ if the MEC server chooses to stay idle, i.e., not select any code nor edge nodes to execute the task but wait until the next time slot.
\subsection{Immediate Reward}
In this work, we aim to minimize the average long-term delay of learning tasks. In general, the delay of a learning task is defined as the time it spends in the system, including the waiting/queuing time and the serving time. However, in our system, a learning task will remain in the queue until the MEC server successfully decodes the results sent back from the assigned edge nodes. Thus, the average delay of a learning task can be calculated as the time it waits in the task queue from the time it arrives at the system until the MEC server successfully decodes its result. Note that at time slot $t$ when the MEC server makes action $a_t$ to serve a learning task at state $s_t$, it may not be able to know when the learning task is completed. The main reason is that the time to compete this task is determined by the communication time and the computing time as expressed in (\ref{eq:commu_time}) and (\ref{eq:compu_time}), respectively. However, the communication time and the computing time is not deterministic due to the random straggling problems in both the edge nodes and wireless links. As a result, to determine the immediate reward when an action is made, we can observe the number learning tasks in the queue. The reason is that we can implicitly capture the delay of all learning tasks through the length/size of the task queue according to the Little theorem. Thus, we define an immediate reward function for action $a_t$ at state $s_t$ using the instantaneous size of the queue as follows:
\begin{equation}
	\label{eq:reward}
	r_t(s_t,a_t) = -m,
\end{equation}
where $m \in \{0, \ldots, M\}$ is the number of learning tasks waiting in the queue after performing action $a_t$ at state $s_t$. By maximizing the immediate reward function, we can minimize the number of learning tasks in the queue, and thus minimizing the average latency of the whole system.
\subsection{Long-Term Delay Minimization Formulation}
This work aims to obtain the optimal coding and scheduling policy which is a mapping from a given state $s$ to an optimal action to minimize the average long-term reward of the system. In other words, we aim to minimize the average number of learning tasks waiting in the queue. Thus, the optimal coding and scheduling policy can be denoted by $\pi^*:\mathcal{S} \rightarrow \mathcal{A}$, which is then expressed as follows:
\begin{eqnarray}
	\label{eq:average_reward}
	\max_\pi	& &	{\mathcal{R}}(\pi)	=	\lim_{T \rightarrow \infty} \frac{1}{T} \sum_{t=1}^{T} {\mathbb{E}} \left( r_t (s_t, \pi(s_t)) \right),	\label{eq:cmdp_obj}
\end{eqnarray}
where $r_t (s_t, \pi(s_t))$ is the immediate reward under policy $\pi$ at time step $t$ and ${\mathcal{R}}(\pi)$ is the average long-term reward of the system under policy $\pi$. To guarantee that the optimal coding and scheduling policy exists and can be obtained with any initial states, we prove that the average reward $\mathcal{R}(\pi)$ is well defined and does not depend on the initial state as stated in Theorem~\ref{theo:limitexists}.

\begin{theorem}
	\label{theo:limitexists}
	The average reward does not depend on the initial state and is well defined.
\end{theorem}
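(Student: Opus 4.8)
The plan is to show that, under any stationary policy $\pi$, the Markov chain induced on the state space $\mathcal{S}$ is a unichain — indeed irreducible and aperiodic on a finite state space — and then to invoke the classical ergodic theorem for finite Markov chains to conclude that the Ces\`aro time-average of the expected rewards converges to a limit that is independent of the initial state. The first thing I would note is that the effective state space is finite: the queue component $m$ takes values in $\{0,\ldots,M\}$, the node-status vector $(e_1,\ldots,e_N)$ lives in $\{0,1\}^N$, and the task-size component $f$ ranges over the finite set of admissible learning-task sizes. Hence $|\mathcal{S}| < \infty$, and the immediate reward $r_t(s_t,\pi(s_t)) = -m$ is uniformly bounded, with $|r_t| \leq M$; this boundedness is what guarantees that any limit we obtain is finite.

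The central step is to establish irreducibility by exhibiting a reference state $s_0 = \big(0,\,0,\,\{1,\ldots,1\}\big)$ (empty queue, all edge nodes available) that communicates with every other state. To reach $s_0$ from an arbitrary $s$, I would argue that with probability bounded strictly below $1$ there are no new arrivals over a finite run of slots, since a task arrives each slot only with probability $\mu$ and $1-\mu > 0$, while every in-service sub-learning task completes in finitely many slots with positive probability: the stochastic computing component $f(\lambda_j)$ is finite almost surely (exponential with rate $\lambda_j > 0$), and the transmission count $H_j$ is geometric with success probability $1-p_j > 0$, so each slot carries a positive probability of completion. Conversely, starting from $s_0$, a suitable sequence of arrivals (each of probability $\mu > 0$) together with admissible actions builds up any prescribed queue length and node-busy configuration, so $s_0$ reaches every state as well.

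Aperiodicity then follows from a self-loop at $s_0$: from the empty queue with all nodes idle, with probability $1-\mu > 0$ no task arrives and the chain remains at $s_0$, forcing the period of the (single) communicating class to be $1$. Combining finiteness, irreducibility, and aperiodicity, the chain is ergodic and admits a unique stationary distribution $\phi_\pi$ that is independent of the initial distribution. By the ergodic theorem for finite Markov chains, the average reward satisfies
\begin{equation}
	\mathcal{R}(\pi) = \lim_{T \rightarrow \infty} \frac{1}{T}\sum_{t=1}^{T}\mathbb{E}\big(r_t(s_t,\pi(s_t))\big) = \sum_{s \in \mathcal{S}} \phi_\pi(s)\, r\big(s,\pi(s)\big),
\end{equation}
which is finite because $|r| \leq M$ and does not depend on the initial state. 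Since this holds for every stationary $\pi$, the average reward $\mathcal{R}(\pi)$ is well defined.

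I expect the main obstacle to be the rigorous verification of irreducibility, specifically uniformly lower-bounding the probability of the finite transition paths that connect an arbitrary state to $s_0$. The subtlety is that the per-task serving durations in the model are stochastic and expressed in continuous time (through $f(\lambda_j)$) and over variable numbers of slots (through $H_j$), so one must carefully translate these into per-slot completion events of strictly positive probability and confirm that the parameter assumptions $\mu \in (0,1)$, $p_j < 1$, and $\lambda_j > 0$ indeed hold, ruling out any absorbing or unreachable configuration that would break the single-communicating-class property.
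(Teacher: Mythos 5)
Your proposal takes essentially the same route as the paper: the appendix likewise argues that the induced Markov chain is irreducible (queue can increase via arrivals with probability $\mu$ and decrease via completions, nodes toggle between busy and available) and then cites the standard average-reward MDP result to conclude the limit exists and is independent of the initial state. Your version is in fact more careful than the paper's --- you add the reference state $s_0$, the aperiodicity argument via the self-loop, and the explicit ergodic-theorem conclusion --- though note that your finiteness claim for the task-size component rests on restricting $f$ to a finite set, whereas the state space as defined in the paper allows any $f \geq 0$ (a point the paper's own proof also glosses over).
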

The proof of Theorem~\ref{theo:limitexists} is provided in Appendix~\ref{appendix:limitexist}.
\section{Optimal Coded Edge Computing with Reinforcement Learning Algorithms}
\label{sec:QDeepQ}
\subsection{Q-Learning based Coded Edge Computing}
There are several approaches to solve the dynamic optimization problem in~(\ref{eq:average_reward}) such as value iteration, policy iteration, and linear programming~\cite{filar2012competitive}. However, most of them require complete environment information, e.g., wireless link disconnection probability and straggling parameters of edge nodes, which may not be always available in advance in practice. Thus, in this section, we propose a reinforcement learning algorithm based on the Q-learning algorithm~\cite{watkins1992q} to deal with the dynamics and uncertainty of the environment by learning from previous observations. In particular, the algorithm implements a Q-table to store the Q-values for all the state-action pairs. At a given state $s_t$, the algorithm makes an action $a_t$ based on the $\epsilon$-greedy policy. Specifically, the algorithm selects an action that maximizes Q-value in the Q-table at state $s$ with probability $1-\epsilon$ and chooses a random action at state $s$ with probability $\epsilon$. After that, the algorithm receives an immediate reward $r_t$, and the system then moves to a next state $s_{t+1}$. From these observations, the algorithm updates the Q-value for the state-action pair $(s_t, a_t)$ as follows~\cite{watkins1992q}:
\begin{equation}
\label{Eq:qfunction}
\begin{aligned}
\mathcal{Q}_{t+1}(s_t,a_t) = \mathcal{Q}_t(s_t,a_t) + \tau_t \Big [ r_t(s_t, a_t) + \gamma\max_{a_{t+1}} \mathcal{Q}_t(s_{t+1}, a_{t+1})- \mathcal{Q}_t(s_t,a_t)\Big ],
\end{aligned}
\end{equation}
where $r_t(s_t, a_t)$ is the immediate reward obtained after executing action $a_t$ at state $s_t$, $\gamma \in [0,1)$ is the discount factor that determines the weight of future reward. Typically, with small value of $\gamma$, the algorithm tends to maximize the short-term reward. In contrast, if $\gamma$ approaches 1, the algorithm will select actions to maximize the long-term reward. To find the optimal coding and scheduling policy effectively, $\gamma$ is usually set at high values, e.g., $0.9$. $\tau_t$ is the learning rate that is used to determine the impact of new information to the existing value. In practice, $\tau_t$ is set close to zero, e.g., $0.1$.

It is worth noting that the goal of (\ref{Eq:qfunction}) is finding the temporal difference between the target Q-value $r_t(s_t, a_t) + \gamma \max_{a_{t+1}} \mathcal{Q}_t(s_{t+1}, a_{t+1})$ and the current estimated Q-value $\mathcal{Q}_t(s_t,a_t)$. Through updating the Q-table based on (\ref{Eq:qfunction}), the Q-learning algorithm can gradually converge to the optimal coding and scheduling policy. To guarantee the convergence for the Q-learning algorithm, the learning rate $\tau_t$ is deterministic, non-negative, and satisfies (\ref{Eq:rules})~\cite{watkins1992q}.
\begin{equation}
	\label{Eq:rules}
	\tau_t \in [0,1), \sum_{t=1}^{\infty}\tau_t = \infty, \mbox{ and } \sum_{t=1}^{\infty} ( \tau_t  )^{2} < \infty.
\end{equation}
Under the conditions in (\ref{Eq:rules}), we can show that the Q-learning algorithm will converge to the optimal policy with probability one in the following theorem.
\begin{theorem}
	\label{theo:convergeQ}
	Under~(\ref{Eq:rules}), Q-learning algorithm is guaranteed to converge to the optimal policy.
\end{theorem}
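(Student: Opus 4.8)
The plan is to recognize the update rule~(\ref{Eq:qfunction}) as a stochastic approximation scheme and to invoke the classical contraction-based convergence argument for tabular Q-learning~\cite{watkins1992q}. First I would rewrite the update in the standard convex-combination form
\[
\mathcal{Q}_{t+1}(s_t,a_t) = (1-\tau_t)\mathcal{Q}_t(s_t,a_t) + \tau_t\Big(r_t(s_t,a_t) + \gamma\max_{a'}\mathcal{Q}_t(s_{t+1},a')\Big),
\]
which exhibits $\mathcal{Q}_{t+1}$ as a step-size-weighted blend of the old estimate and a noisy sample of the Bellman target. The cornerstone is the Bellman optimality operator $\mathbf{H}$ defined by $(\mathbf{H}\mathcal{Q})(s,a) = r(s,a) + \gamma\sum_{s'}P(s'\,|\,s,a)\max_{a'}\mathcal{Q}(s',a')$; since the discount factor satisfies $\gamma\in[0,1)$ and the immediate reward~(\ref{eq:reward}) is bounded (because $m\in\{0,\ldots,M\}$), $\mathbf{H}$ is a $\gamma$-contraction in the max-norm $\|\cdot\|_\infty$, so by the Banach fixed-point theorem it admits a unique bounded fixed point $\mathcal{Q}^*$, the optimal action-value function.

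Next I would introduce the error process $\Delta_t(s,a) \triangleq \mathcal{Q}_t(s,a) - \mathcal{Q}^*(s,a)$ and isolate the random noise $w_t = r_t + \gamma\max_{a'}\mathcal{Q}_t(s_{t+1},a') - (\mathbf{H}\mathcal{Q}_t)(s_t,a_t)$, which is conditionally zero-mean given the history $\mathcal{F}_t$. Substituting and using the fixed-point identity $\mathbf{H}\mathcal{Q}^*=\mathcal{Q}^*$, the error obeys a recursion of the form $\Delta_{t+1} = (1-\tau_t)\Delta_t + \tau_t\big((\mathbf{H}\mathcal{Q}_t - \mathbf{H}\mathcal{Q}^*) + w_t\big)$. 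I would then apply the stochastic-approximation lemma of Jaakkola, Jordan, and Singh (equivalently, Dvoretzky's theorem), whose three hypotheses I would verify in turn: (i) the step-size conditions $\tau_t\in[0,1)$, $\sum_t\tau_t=\infty$, and $\sum_t\tau_t^2<\infty$ hold by assumption~(\ref{Eq:rules}); (ii) the contraction bound $\|\mathbb{E}[\mathbf{H}\mathcal{Q}_t-\mathbf{H}\mathcal{Q}^*\mid\mathcal{F}_t]\|_\infty \le \gamma\|\Delta_t\|_\infty$ follows directly from the $\gamma$-contraction of $\mathbf{H}$; and (iii) the conditional variance of $w_t$ is bounded by $C(1+\|\Delta_t\|_\infty^2)$, which I would establish from the boundedness of the rewards and of $\mathcal{Q}^*$. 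The lemma then yields $\Delta_t\to 0$ with probability one, i.e. $\mathcal{Q}_t\to\mathcal{Q}^*$, and the greedy policy with respect to $\mathcal{Q}^*$ is optimal.

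The hard part is verifying the hypotheses of the stochastic-approximation lemma on a per-component basis, because the updates are asynchronous: at each step only the visited pair $(s_t,a_t)$ is modified. The proof therefore hinges on guaranteeing that every state-action pair is visited infinitely often, so that the divergent-sum condition in~(\ref{Eq:rules}) applies to each coordinate of $\Delta_t$. I would secure this from the persistent exploration of the $\epsilon$-greedy policy (every admissible action is chosen with probability at least $\epsilon/|\mathcal{A}|$) together with the recurrence of the underlying Markov chain established in Theorem~\ref{theo:limitexists}. A secondary caveat worth flagging is that the argument is stated for a finite (tabular) state-action space; since the task-size component $f$ of the state in~(\ref{eq:state}) is continuous, strict tabular convergence presumes a discretization of $f$, which also motivates the function-approximation (deep dueling) approach developed in the remainder of Section~\ref{sec:QDeepQ}.
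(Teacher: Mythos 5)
Your argument is correct, but note that the paper does not actually prove this theorem at all: its entire ``proof'' is a pointer to \cite{watkins1992q}. What you have written out is the standard stochastic-approximation route (Jaakkola--Jordan--Singh / Dvoretzky): cast the update as a convex combination, show the Bellman optimality operator is a $\gamma$-contraction in $\|\cdot\|_\infty$, split the error recursion into a contracting drift plus conditionally zero-mean noise, and verify the step-size, contraction, and bounded-variance hypotheses. The original argument in \cite{watkins1992q} instead uses the ``action-replay process'' construction, so strictly speaking your route is a different (and by now more standard and more general) proof than the one the citation points to; both deliver the same conclusion. The most valuable part of your proposal is precisely what the bare citation sweeps under the rug: (i) almost-sure convergence requires every state--action pair to be updated infinitely often, which here must be secured from the $\epsilon$-greedy exploration together with the recurrence of the underlying chain established in Theorem~\ref{theo:limitexists} (otherwise the condition $\sum_t \tau_t = \infty$ in~(\ref{Eq:rules}) is vacuous on unvisited coordinates); and (ii) the state space in~(\ref{eq:state}) has a continuous component $f$, so the tabular guarantee only holds after discretizing $f$ --- a caveat the paper acknowledges only implicitly when it motivates replacing the Q-table with the deep dueling network. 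Your proposal is therefore more careful than the paper's treatment, not less.
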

The proof of Theorem~\ref{theo:convergeQ} can be found in~\cite{watkins1992q}.

\begin{algorithm}[t]
	\caption{Optimal Coding and Scheduling Policy with Q-learning Algorithm}
	\label{alg:qlearning}
	\begin{algorithmic}
		\State \textbf{Inputs:} For each state-action pair $(s, a)$, initialize the table entry $\mathcal{Q}(s, a)$ arbitrarily. Observe the current state $s$, initialize values for the learning rate $\tau$ and the discount factor $\gamma$.
		\For{\textit{t=1 to T}}
		\State \multiline{From the current state-action pair $(s_t, a_t)$, execute action $a_t$ based on the $\epsilon$-greedy method and obtain the immediate reward $r_t$ and new state $s_{t+1}$.}
		\State \multiline{Find the maximum value of $\mathcal{Q}_t(s_{t+1}, a_{t+1})$ and then update the table entry for $\mathcal{Q}(s_t, a_t)$ as follows:}
		\begin{equation}
		\begin{aligned}
		\mathcal{Q}_{t+1}(s_t,a_t) = \mathcal{Q}_t(s_t,a_t) + \tau_t\Big [ r_t(s_t, a_t) + \gamma\max_{a_{t+1}} \mathcal{Q}_t(s_{t+1}, a_{t+1})- \mathcal{Q}_t(s_t,a_t)\Big].
		\end{aligned}
		\end{equation}
		\State Replace $s_t \leftarrow s_{t+1}$.
		\EndFor
		\State {\textbf{Outputs:}} $\pi^*(s) = \arg\max_{a} \mathcal{Q}^*(s,a)$.
	\end{algorithmic}
\end{algorithm}

Algorithm~\ref{alg:qlearning} presents the details of the Q-learning algorithm. After a finite steps, the algorithm can obtain the optimal coding and scheduling policy for the system~\cite{watkins1992q}. Nevertheless, the Q-learning based algorithms face the slow-convergence problem, especially with large state and action spaces in our system. Moreover, as conventional Q-learning algorithms only can effectively handle discrete state space, they may not be feasible for our consider problem as the sizes of learning tasks are continuous variables. Therefore, in the following, we propose a highly-effective reinforcement learning algorithm using recent advances of deep neural network and deep dueling architecture in order to not only address the continuous state space problem, but also quickly find the optimal policy for the MEC server.
\subsection{Coded Computing with Deep Dueling Algorithm}
\label{Sec:deepdueling}
The key idea of the deep dueling algorithm is training a deep dueling neural network to find the approximated values of $\mathcal{Q}^*(s,a)$ instead of implementing the Q-table as in the conventional Q-learning algorithms. In particular, in each training iteration, given a current state $s_t$, similar to the Q-learning algorithm, the deep dueling algorithm selects an action $a_t$ based on the $\epsilon$-greedy policy. After performing the action, the algorithm observes reward $r_t$ and next state $s_{t+1}$. In this work, we adopt the experience replay mechanism to improve the efficiency of the training process. Specifically, all transitions $(s_t, a_t, r_t, s_{t+1})$ (i.e., experiences) are stored in a replay memory $\mathbf{D}$. The algorithm then randomly chooses a number of samples from the replay memory and feeds them to the deep neural network for training. As such, the previous experiences can be efficiently learned many times to improve the stability of the algorithm~\cite{mnih2015human}.

\subsubsection{Deep Dueling Neural Network Architecture}
\begin{figure}[!]
	\centering
	\includegraphics[scale=0.17]{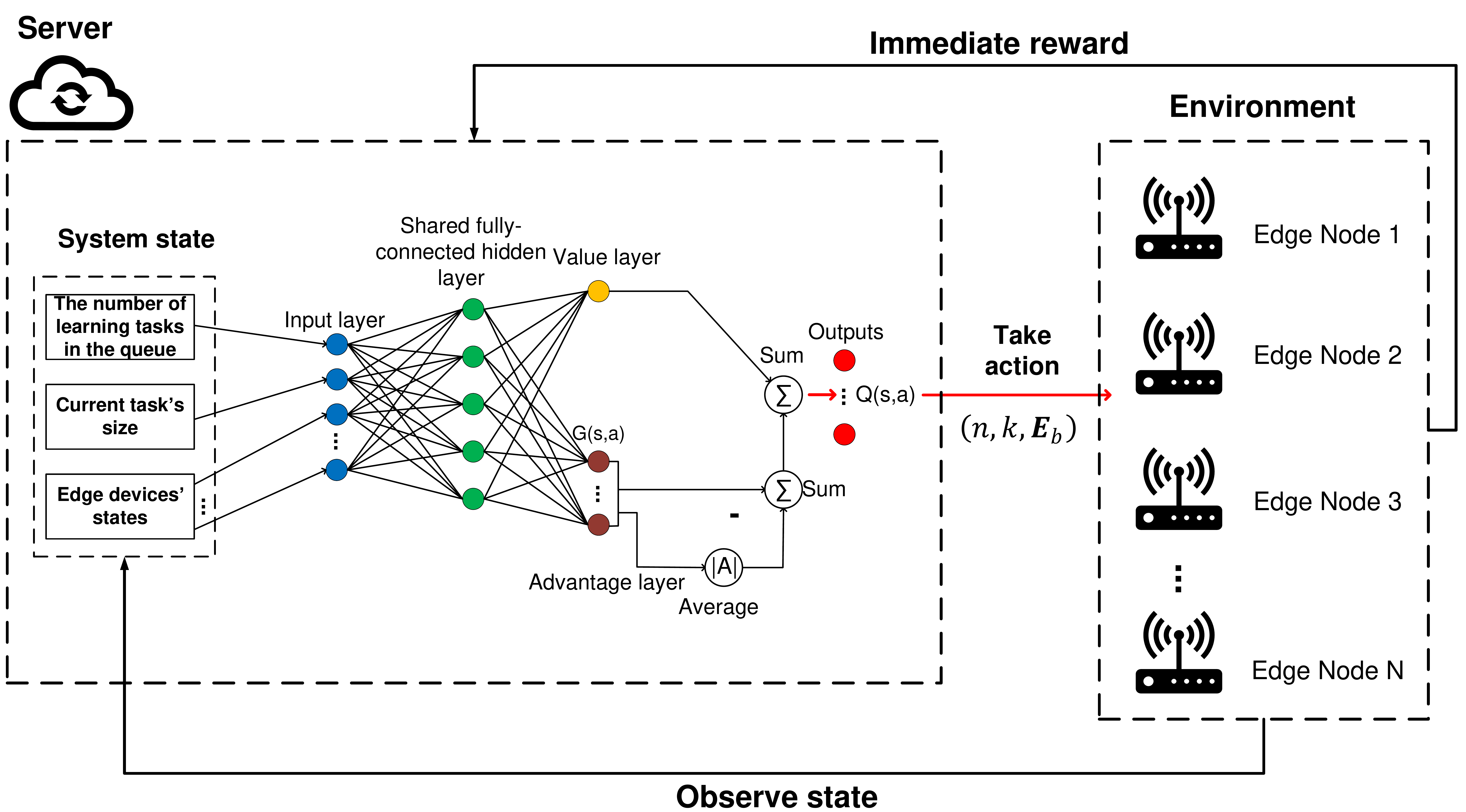}
	\caption{Deep dueling network architecture for coded computing over wireless edge networks.}
	\label{Fig.deepduelingqlearning}
\end{figure}

As mentioned, in this work, we aim to deal with the unpredictable straggling problems occurred on the wireless links and the edge nodes. Thus, conventional deep neural networks may not be able to learn effectively in this highly-dynamic system. Moreover, in this paper, we not only find the optimal code but also select the best edge nodes to serve each learning task. As a result, the action space is high-dimensional, which may degrade the learning rate of conventional deep neural networks. To address these problems, we propose a novel neural network architecture based on the dueling neural network with two streams of hidden layers~\cite{wang2016dueling}. The dueling neural network has demonstrated its effectiveness in many applications~\cite{van2019jam, van2019optimal, van2020deep}. The key idea of this architecture is that in many states, the choices of corresponding actions have no effect on the system~\cite{wang2016dueling}. For example, the MDS codes with high values of $n$ and $k$ only matter when learning task sizes are large. Moreover, when the available resources of the system are limited, choosing different MDS codes makes insignificant differences. As such, instead of estimating the Q-value function for each pair of state and action, we divide it into value and advantage functions. The value function is used to estimate how good it is when the system is at a given state. The advantage function represents the importance of a certain action compared to other actions. Thus, we implement a deep dueling neural network consisting of two streams of fully connected layers to separately estimate the value and advantage functions. These two functions are then combined at the output layer as illustrated in Fig.~\ref{Fig.deepduelingqlearning}. As a result, the deep dueling algorithm can achieve more robust estimates of state values, thereby improving the performance in terms of convergence rate and stability. In the following, we present details of of how to separate the Q-value function into the advantage and value functions.

With a given policy $\pi$, the values of each state-action pair $(s,a)$ and state $s$ are expressed as $\mathcal{Q}^{\pi}(s,a)= \mathbb{E} \big[r_t|s_t=s, a_{t}=a,\pi\big]$ and $\mathcal{V}^{\pi}(s) = \mathbb{E}_{a \sim \pi(s)}\big[\mathcal{Q}^{\pi}(s,a)\big]$, respectively. The advantage function of actions is formulated as $\mathcal{G}^{\pi}(s,a) =  \mathcal{Q}^{\pi}(s,a) - \mathcal{V}^{\pi}(s).$ The values of $\mathcal{V}$ and $\mathcal{G}$ functions are then estimated by the deep dueling neural network. Specifically, one stream of fully-connected layers procedures a scalar $\mathcal{V}(s;\bm{\beta})$, while another one estimates an $|\mathcal{A}|$-dimensional vector $\mathcal{G}(s, a;\bm{\alpha})$, where $\bm{\alpha}$ and $\bm{\beta}$ are the parameters of the deep dueling neural network. Then, these two streams are combined at the output layer to derive the Q-value function as follows:
\begin{equation}
	\label{combined}
	\mathcal{Q}(s, a;\bm{\alpha}, \bm{\beta}) = \mathcal{V}(s;\bm{\beta}) + \mathcal{G}(s, a;\bm{\alpha}).
\end{equation}
Note that $\mathcal{Q}(s, a;\bm{\alpha}, \bm{\beta})$ is a parameterized estimate of the true Q-function. Given $\mathcal{Q}$, it is impossible to derive $\mathcal{V}$ and $\mathcal{G}$ uniquely. This is due to the fact that the Q-value is not changed when subtracting a constant from $\mathcal{G}(s, a;\bm{\alpha})$ and adding the same constant to $\mathcal{V}(s;\bm{\beta})$. This leads to a poor performance as (\ref{combined}) is unidentifiable. To deal with this issue, the Q-value function can be obtained by the following mapping:
\begin{equation}
	\label{ouput_max}
	\mathcal{Q}(s,a;\bm{\alpha}, \bm{\beta}) = \mathcal{V}(s;\bm{\beta}) + \big(\mathcal{G}(s,a;\bm{\alpha})-\max_{a \in \mathcal{A}}\mathcal{G}(s,a;\bm{\alpha})\big).
\end{equation}
As such, the advantage function estimator has no advantage when choosing actions. Intuitively, given $a^*=\argmax_{a \in \mathcal{A}} \mathcal{Q}(s,a;\bm{\alpha}, \bm{\beta})=\argmax_{a \in \mathcal{A}}\mathcal{G}(s,a;\bm{\alpha})$, we have $\mathcal{Q}(s, a^*;\bm{\alpha}, \bm{\beta})=\mathcal{V}(s;\bm{\beta})$. Thus, we can convert (\ref{ouput_max}) to a simple form as follows:
\begin{equation}
	\label{output_average}
	\mathcal{Q}(s,a;\bm{\alpha}, \bm{\beta}) \!\!=\!\! \mathcal{V}(s;\bm{\beta}) \!+\! \big(\mathcal{G}(s,a;\bm{\alpha}) \!-\! \frac{1}{|\mathcal{A}|}\sum_{a}^{}\mathcal{G}(s, a;\bm{\alpha})\big).
\end{equation}
\begin{algorithm}[t]
	\caption{Optimal Coding and Scheduling with Deep Dueling Neural Network Architecture}
	\label{deepduelingqlearning}
	\begin{algorithmic}[1]
		\State Initialize replay memory $\mathbf{D}$ to capacity $\mathcal{D}$.
		\State Build the $\mathcal{Q}$ network with two fully-connected layers with random weights $\bm{\alpha}$ and $\bm{\beta}$.
		\State Build the target $\hat{\mathcal{Q}}$ network with random weights $\bm{\alpha}^- = \bm{\alpha}$ and $\bm{\beta}^- = \bm{\beta}$.
		\For{\textit{iteration=1 to I}}
		\State \multiline{Select action $a_t$ based on the $\epsilon$-greedy policy.}
		\State Execute action $a_t$ and observe immediate reward $r_t$ and next state $s_{t+1}$.
		\State Store experiences $(s_t, a_t, r_t, s_{t+1})$ in $\mathbf{D}$.
		\State Sample random mini-batch of transitions $(s_j, a_{j}, r_j, s_{j+1})$ from $\mathbf{D}$.
		\State Run a gradient descent step on $(y_j-\mathcal{Q}(s_j, a_{j}; \bm{\alpha}, \bm{\beta}))^2$.
		\State Every $C$ steps reset $\hat{\mathcal{Q}} = \mathcal{Q}$.
		\EndFor
	\end{algorithmic}
\end{algorithm}
Next, random samples of transitions from the replay memory is fed into the deep dueling neural network for training, and then the Q-value function is obtained by using (\ref{output_average}). Note that the features of the input layer are the states of the queue, the available edge nodes, and the size of the considered learning task, defined in (\ref{eq:state}). As such, all aspects of each system state are trained to improve the performance of the algorithm. Nevertheless, as proved in~\cite{mnih2015human}, during the training process, the Q-values for each pair of state and action will be changed. Thus, the algorithm may not be stable if a constantly shifting set of values is used to update the Q-network. To solve this problem, we use the \emph{quasi-static target network} method to improve the stability of the algorithm. In particular, we implement a target Q-network and update its network parameters $(\bm{\alpha^-}, \bm{\beta^-})$ with the Q-network parameters $(\bm{\alpha}, \bm{\beta})$ after every $C$ steps. The target network parameters remain unchanged between individual updates. For each transition $j$ in the random samples, we denote $y_j =  r_j + \gamma\max_{a_{j+1}}\mathcal{Q}(s_{j+1},a_{j+1};\bm{\alpha^-}, \bm{\beta^-})$ as the target value in the training process. Then, the loss function can be expressed as follows:
\begin{equation}
	\label{lossfunction}
	L_j(\bm{\alpha}, \bm{\beta})\!\!=\!\!\mathbb{E}_{(s_j,a_j,r_j,s_{j+1})\sim U(\mathbf{D})}\big[ \big( y_j -\mathcal{Q}(s_j,a_j;\bm{\alpha}, \bm{\beta})\big)^2\big],
\end{equation}
where $\gamma$ is the discount factor. Differentiating the loss function in~(\ref{lossfunction}) with respect to the neural network's parameters, we have the following gradient:
\begin{equation}
\label{gradient_loss}
\begin{aligned}
\nabla_{\bm{\alpha}, \bm{\beta}}L(\bm{\alpha}, \bm{\beta})=\mathbb{E}_{(s_j,a_j,r_j,s_{j+1})} \big[\big(y_j -\mathcal{Q}(s_j,a_j;\bm{\alpha}, \bm{\beta})\nabla_{\alpha,\beta}\mathcal{Q}(s_j,a_j;\bm{\alpha}, \bm{\beta})\!\big)\big].
\end{aligned}
\end{equation}
By using~(\ref{gradient_loss}), the loss function is minimized to update the parameters of the deep dueling network. After a number of iterations, the algorithm can obtain the optimal coding and scheduling policy for the system. Algorithm~\ref{deepduelingqlearning} provides details of the deep dueling algorithm.
\subsubsection{Complexity Analysis}
In this work, the deep dueling neural network consists of input layer $L_0$, hidden layer $L_1$, and two streams to estimate the value and the advantage function. The value stream consists of layer $L_\mathrm{value}$ which is used to estimate the value function. The advantage stream consists of layer $L_\mathrm{advantage}$ which is used to estimate the advantage function. Let $|L_\mathrm{i}|$ denote the size (i.e., the number of neurons) of layer $L_\mathrm{i}$. We then can formulate the complexity of the deep dueling neural network as $|L_0||L_1| + |L_1||L_\mathrm{value}| + |L_1||L_\mathrm{advantage}|$. At each training step, a number of training samples, i.e., transitions, are randomly taken from the memory pool and fed to the deep dueling neural network for training. Thus, the total complexity of the training process is $\mathcal{O}\Big(IN_\mathrm{b}\Big(|L_0||L_1| + |L_1||L_\mathrm{value}| + |L_1||L_\mathrm{advantage}|\Big)\Big)$, where $N_\mathrm{b}$ is the size of the training batch and $I$ is the total number of training iterations. In our paper, the size of $L_0$ is the number of state features including the number of learning task currently waiting in the queue, the current task size, and the states of edge nodes, therefore $|L_0|= 2+N$. Hidden layer $L_1$ has 16 neurons. $|L_\mathrm{value}| = 1$ as this layer is used to estimate the value of the current state only. The size of $L_\mathrm{advantage}$ is the number of actions that the MEC can choose as the advantage layer is used to estimate the advantage function of all feasible actions in the current state. Clearly, the architecture of our deep dueling neural network is simple. Thus, it can be easily deployed at the MEC server which is usually equipped with sufficient computing resources. In the simulations, we show that with only 16 neurons in the hidden layers and the batch size is 16, our proposed deep dueling can converge to the optimal coding and scheduling policy much faster than those of the conventional Q-learning and deep Q-learning algorithms.
\section{Performance Analysis and Simulation Results}
\label{sec:evaluation}
\subsection{Parameter Setting}
In this work, we consider that the task queue at the MEC server can store up to 10 learning tasks. There are five edge nodes in the system to serve learning tasks. Unless otherwise stated, the task arrival probability is set at 0.7. The time for serving one data point is set at 5 milliseconds for all edge nodes~\cite{zhang2019model}. The size (i.e., number of data points) of each learning task is randomly taken from the set of $\{100, 200, 300\}$. We set $\mathbf{p} = \{0.1, 0.5, 0.2, 0.3, 0.9\}$ and $\bm{\lambda}=\{0.1, 1, 0.5, 0.2, 2\}$. All the above parameters are then varied in the next section to evaluate the performance of our proposed algorithm in various scenarios. It is worth noting that our proposed deep dueling algorithm does not require to know these parameters in advance. Instead, it can interact with the environment, observe results, and then learn these parameters to obtain the optimal coding and scheduling policy for the system in a real-time manner.

The architecture of the deep neural network plays an important role in the learning process, and thus it needs to be carefully designed~\cite{goodfellow2016deep}. In particular, with more hidden layers, the algorithm can learn the problem better. However, the complexity of the algorithm will increase, resulting in a long training period. Moreover, a high number of hidden layers does not always guarantee a good learning performance due to the overestimation problems of optimizers. Similarly, the number of neutrons in each layer as well as the mini-batch size are also required a thoughtful design. For the deep Q-learning, we deploy two fully-connected hidden layers connected to the input layer and the output layer. For the deep dueling algorithm, the neural network consists of two streams to separately estimate the value function and the advantage function. These two streams are connected to a shared hidden layer (after the input layer) as illustrated in Fig.~\ref{Fig.deepduelingqlearning}. The sizes of all hidden layers are set at 16. The mini-batch size is set at 16. The maximum size of the experience replay buffer is $10,000$. The target Q-network is updated after every $1,000$ learning steps. We use the $\epsilon$-greedy scheme with the initial value of $\epsilon$ is 1, and its final value is $0.01$. The decay factor is set at $0.9999$. The learning rate and the discount factor of the deep dueling and the deep Q-learning algorithms are set at $0.0001$ and $0.99$, respectively. For the Q-learning, the learning rate and the discount factor are set at $0.1$ and $0.9$, respectively.

To evaluate the proposed solution, we compare its performance with three other approaches: (i) \textit{Greedy}, (ii) \textit{OneNode}, and (iii) \textit{Static Optimal Code}.

\begin{itemize}
	
	\item \textit{Greedy:} For this policy, the MEC server selects all available edge nodes to serve a learning task. The task is then coded and equally distributed to all the available edge nodes. As the MEC server requires results from all assigned edge nodes to successfully decode the final result. This policy is to evaluate the impact of straggling edge nodes and wireless links.
	
	\item \textit{OneNode:} In this policy, the MEC server randomly selects one available edge node to serve a learning task. This policy is used to evaluate the typical uncoded and non-distributed learning approaches.
	
	\item \textit{Static Optimal Code}: This policy is based on the optimal MDS code proposed in~\cite{lee2017speeding}. In particular, given $\mathbf{E}_\mathrm{av}$ edge nodes, the optimal value of $k$ is derived as follows:
	\begin{equation}
		\label{eq:optimalMDS}
		k^\dagger = \Bigg[ 1 + \frac{1}{W_{-1}(-e^{-\hat{\lambda}-1})} \Bigg] |\mathbf{E}_\mathrm{av}|,
	\end{equation}
where $W_{-1}(.)$ is the lower branch of Lambert $W$ function and $\hat{\lambda}$ presents the average straggling parameter of all edge nodes. By using this equation, the MEC server can select $(n=|\mathbf{E}_\mathrm{av}|, k=k^\dagger)$ MDS code for each learning task. This policy is used to show the performance of a static optimal code that does not consider the heterogeneous structure of both edge nodes and wireless links, e.g., in channel/backhaul link quality to/from the MEC server, straggling effects and computing capabilities of edge nodes.
\end{itemize}

We also obtain the policy of the proposed solution by running the conventional Q-learning algorithm~\cite{watkins1992q} to compare the effectiveness of the proposed deep dueling algorithm. The performance metrics for evaluating the proposed approach are the average number of learning tasks in the queue per time slot, the average task dropping probability, and the average delay in the system of a learning task. The average task dropping probability corresponds to the average number of dropped learning tasks in each time slot when the task queue is full. The average delay of a learning task in the system is calculated from the time a learning task arrives at the system until the task leaves the system (i.e., the MEC server finishes serving the task).
\subsection{Simulation Results}
\subsubsection{Convergence of Learning Algorithms}
\begin{figure}[!]
	\centering
	\includegraphics[scale=0.4]{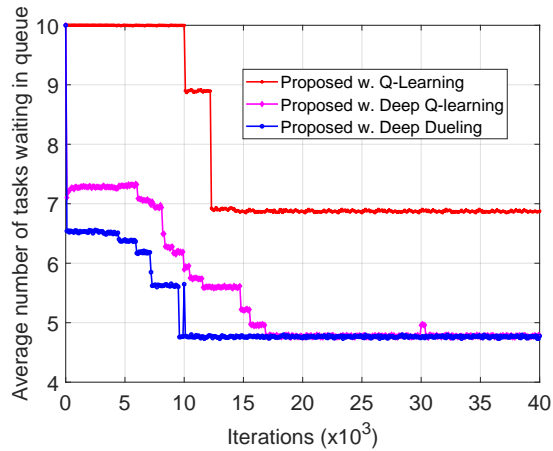}
	\caption{Convergence rates of learning algorithms.}
	\label{Fig.Convergence}
\end{figure}
In Fig.~\ref{Fig.Convergence}, we show the learning process of the Q-learning, deep Q-learning, and deep dueling algorithms. As can be observed, the convergence rate of the Q-learning algorithm is much slower than those of the deep Q-learning and deep dueling algorithms. This is stemmed from the fact that the Q-learning algorithm has a very slow-convergence due to the curse-of-dimensionality problem, especially in complicated systems as considered in our work. By using the novel deep dueling neural network architecture, our deep dueling algorithm achieves the best convergence rate. In particular, the deep dueling algorithm can converge to the optimal coding and scheduling policy within $10,000$ iterations, while the deep Q-learning algorithm needs more than $15,000$ iterations to converge to the optimal coding and scheduling policy. In the next section, all results of the deep dueling algorithm are obtained at $4 \times 10^4$ iterations, while those of the Q-learning algorithm are obtained at $10^6$ iterations. Note that the Q-learning algorithm is used as a benchmark to demonstrate the effectiveness of the proposed deep dueling algorithm.
\subsubsection{System Performance}
\begin{figure*}[!]
	\centering
	\begin{subfigure}[b]{0.3\textwidth}
		\centering
		\includegraphics[scale=0.27]{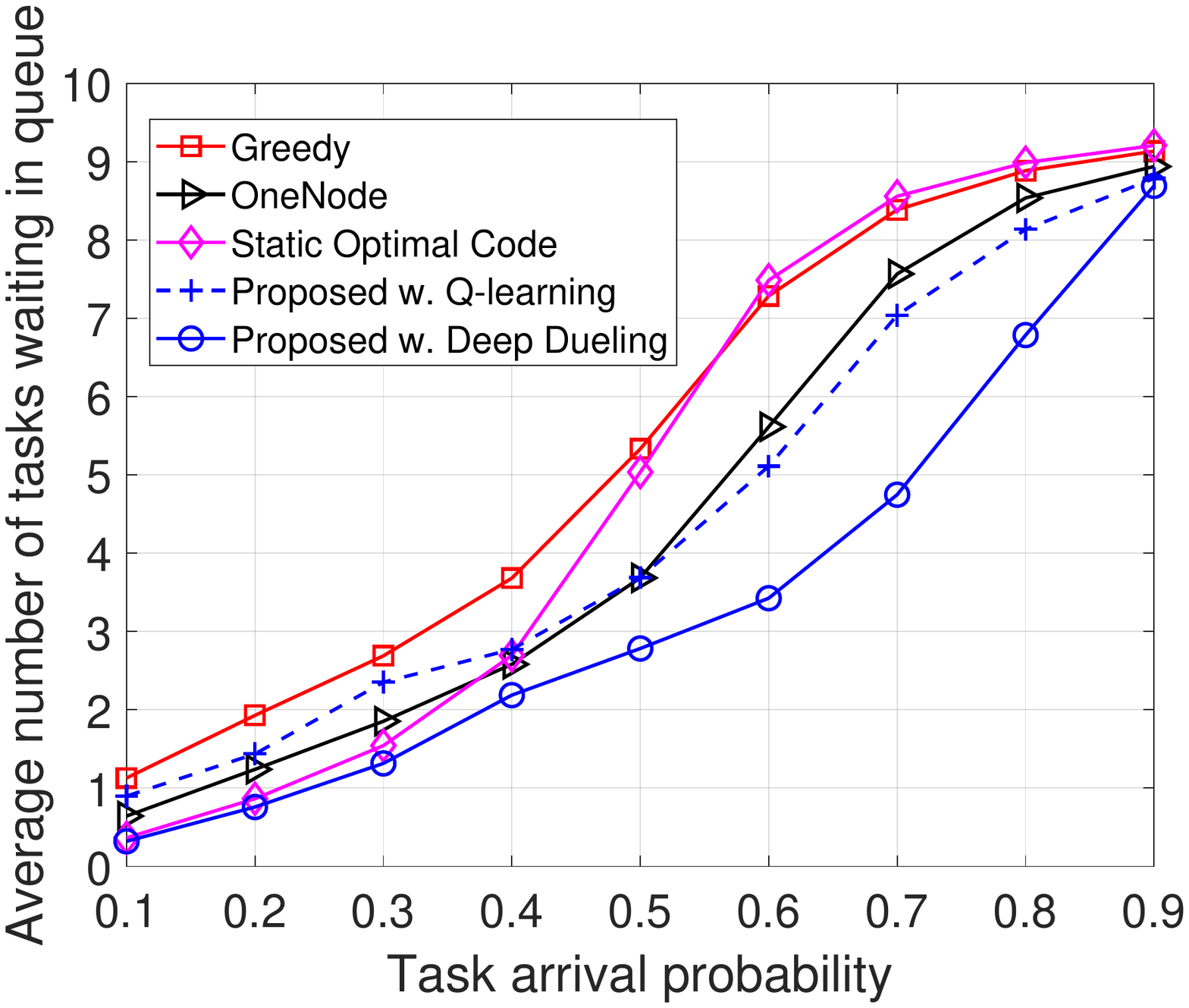}
		\caption{}
	\end{subfigure}%
	~
	\begin{subfigure}[b]{0.3\textwidth}
		\centering
		\includegraphics[scale=0.27]{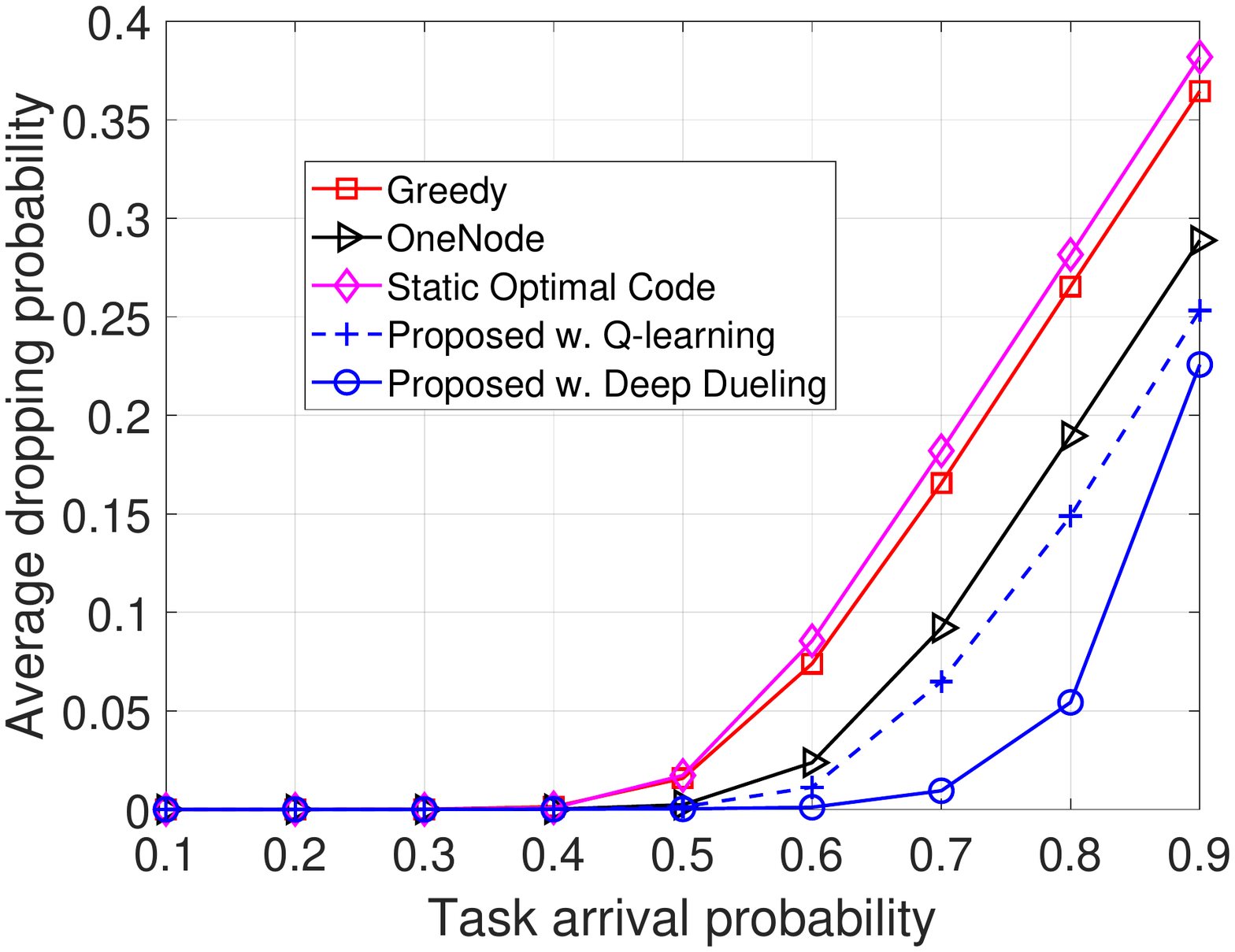}
		\caption{}
	\end{subfigure}%
	~
	\begin{subfigure}[b]{0.3\textwidth}
		\centering
		\includegraphics[scale=0.27]{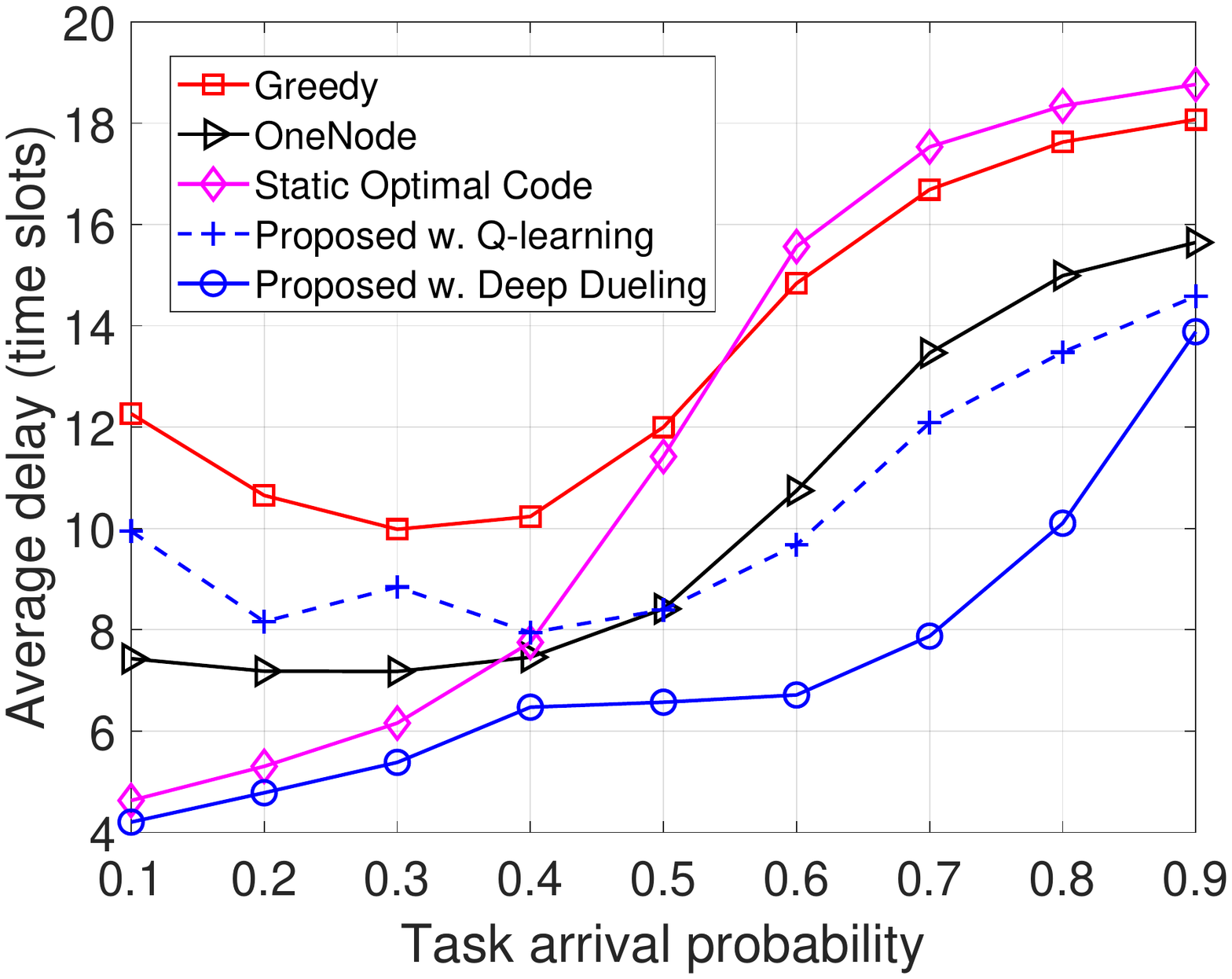}
		\caption{}
	\end{subfigure}%
	\caption{(a) Average number of tasks waiting in the queue, (b) task dropping probability, and (c) average delay of learning tasks in the system vs. task arrival probability.} 
	\label{fig:varyArrival}
\end{figure*}

In this section, we perform simulations to evaluate the performance of the proposed solution in terms of the number of tasks waiting in the queue, the average task dropping probability, and the average delay of learning tasks in the system in various scenarios. First, we vary the arrival probability of learning tasks and compare the performance of the proposed solution with those of \textit{Greedy}, \textit{OneNode}, and \textit{Static Optimal Code} policies as shown in Fig.~\ref{fig:varyArrival}. Clearly, when the task arrival probability increases, the average number of learning tasks in the queue increases as there are more learning tasks arriving at the system as shown in Fig.~\ref{fig:varyArrival}(a). With the proposed deep dueling algorithm, our solution can reduce the number of tasks in the queue by up to 71\%, 50\%, and 54\% compared to the \textit{Greedy}, \textit{OneNode}, and \textit{Static Optimal Code}, respectively. The reason is that the deep dueling algorithm can learn and maximize the number of learning tasks served by the edge nodes by determining the optimal MDS code as well as the best edge nodes with stable wireless links. Fig.~\ref{fig:varyArrival}(b) also confirms the outperformance of our proposed solution in terms of task dropping probability. Interestingly, in Fig.~\ref{fig:varyArrival}(c), when the task arrival probability increases from $0.1$ to $0.3$, the average delay of learning tasks in the system under the \textit{Greedy} and \textit{OneNode} schemes decrease by nearly 20\% and 3\%, respectively. The reason is that under these policies, the MEC server randomly chooses edge nodes to serve learning tasks. As such, there are cases in which learning tasks are severed by highly-straggling edge nodes and/or unstable wireless links. However, when there are more learning tasks arriving at the system, these learning tasks are likely severed by good edge nodes and stable links as the straggling devices may take longer time to serve other learning tasks, and thus they may not be often available. As a result, the average waiting time in the system of a learning task reduces when the arrival probability increases from $0.1$ to $0.3$. The \textit{Static Optimal Code} does not encounter this trend and achieves a better performance compared to the \textit{Greedy} and \textit{OneNode} policies, thanks to the use of MDS code. However, when the task arrival probability is higher than $0.3$, the performance of the \textit{Static Optimal Code} is similar to that of the \textit{Greedy} policy and lower than that of the \textit{OneNode} policy. The reason is that when there are many learning tasks arriving at the system, the computing resources of edge nodes are mostly utilized. Thus, at each time slot, the available edge nodes are likely the nodes with good computing power (as they can finish their assigned task quickly and become available for new tasks). Thus, sending a task to a single edge node with high computing power for processing is better than distributing it to several edge nodes with different wireless connections. Note that the \textit{Static Optimal Code} policy does not account for the effects of unstable wireless links. Nevertheless, by learning and avoiding choosing the slow edge nodes and unstable wireless links, our proposed solution always achieves the best performance. In particular, the average delay of learning tasks obtained by our solution is much lower than those of the \textit{Greedy}, \textit{OneNode}, and \textit{Static Optimal Code} policies, reduced by up to 66\% as shown in Fig.~\ref{fig:varyArrival}(c).

\begin{figure*}[!]
	\centering
	\begin{subfigure}[b]{0.3\textwidth}
		\centering
		\includegraphics[scale=0.27]{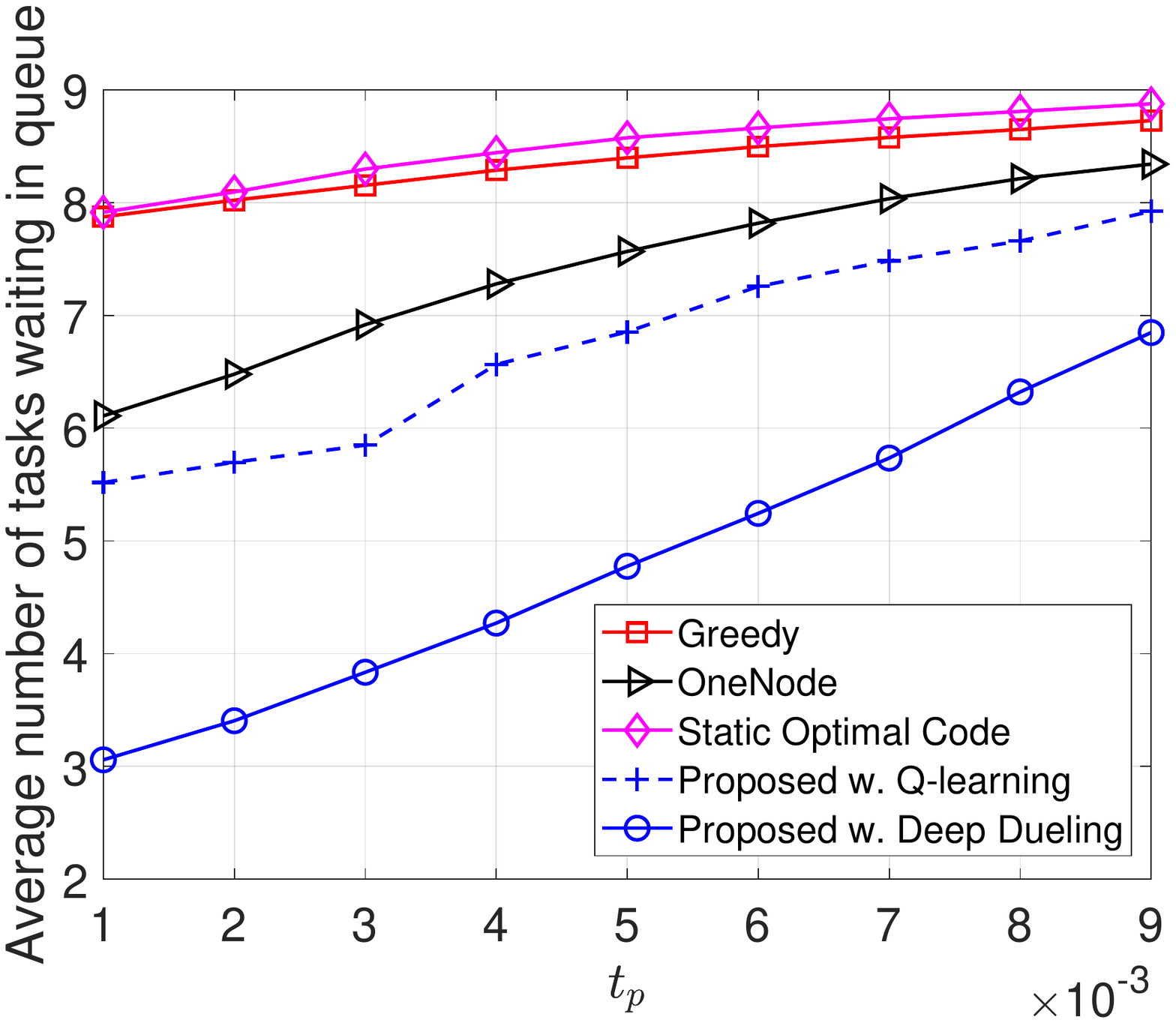}
		\caption{}
	\end{subfigure}%
	~
	\begin{subfigure}[b]{0.3\textwidth}
		\centering
		\includegraphics[scale=0.27]{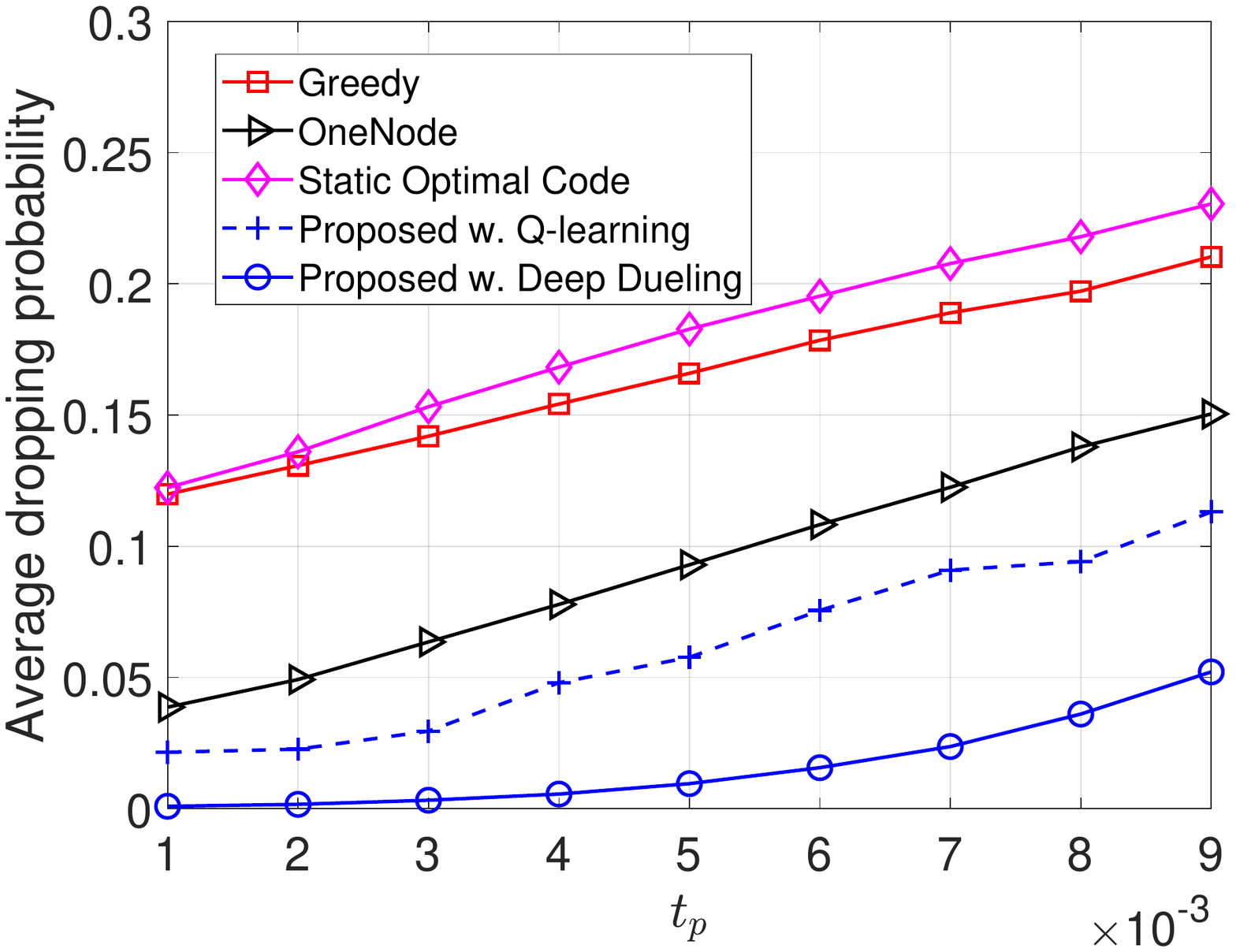}
		\caption{}
	\end{subfigure}%
	~
	\begin{subfigure}[b]{0.3\textwidth}
		\centering
		\includegraphics[scale=0.27]{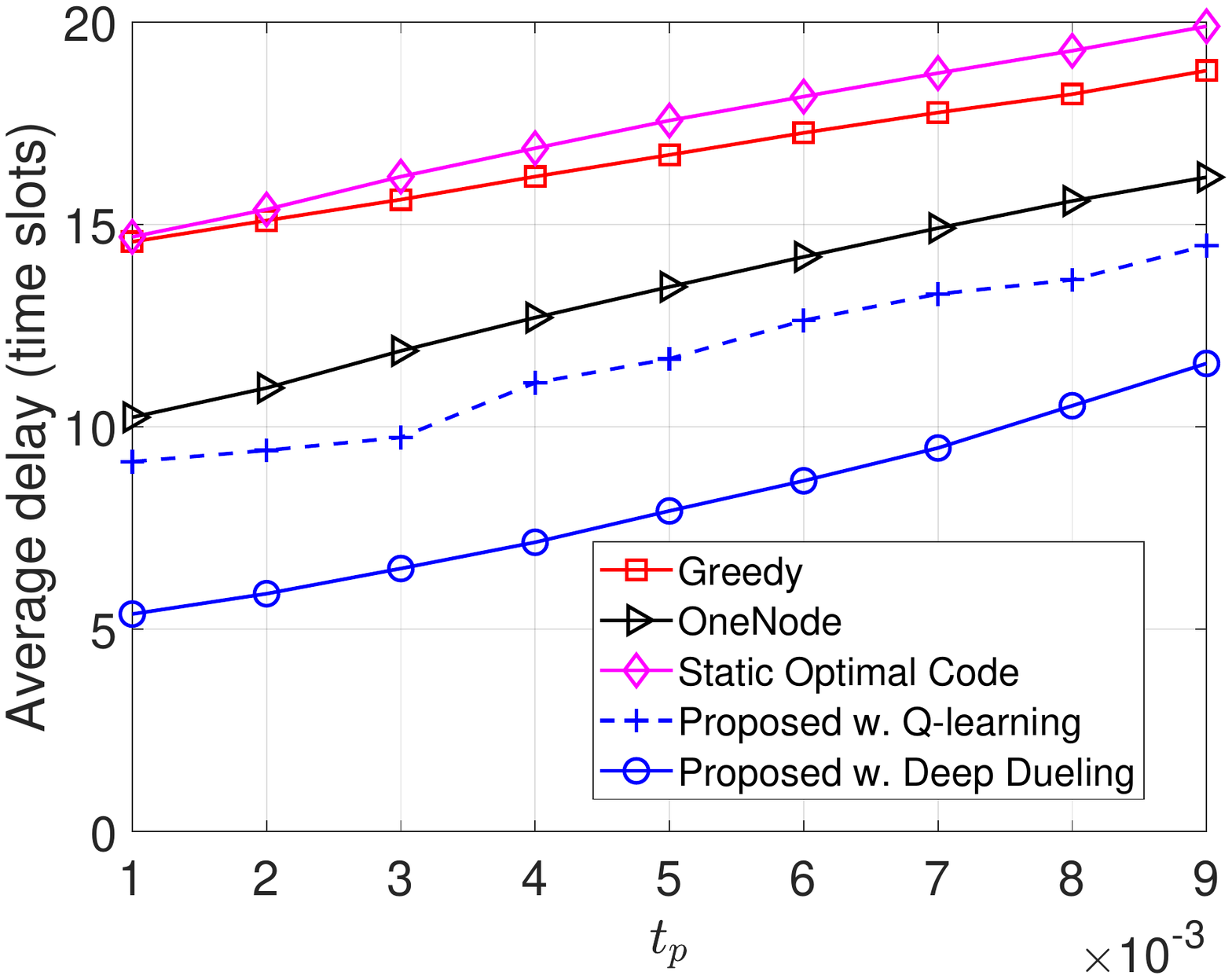}
		\caption{}
	\end{subfigure}%
	\caption{(a) Average number of tasks waiting in the queue, (b) task dropping probability, and (c) average delay of learning tasks in the system vs. processing time.}
	\label{fig:varyComputing}
\end{figure*}

Next, we vary the time for serving one data point (e.g., one matrix row) $t_p$ and evaluate the system performance as shown in Fig.~\ref{fig:varyComputing}. It can be observed that when the processing time increases, the system performances obtained by all methods significantly decrease. The reason is that, with higher processing time, the edge nodes need more time to serve learning tasks, and thus increasing the serving time of learning tasks. Consequently, learning tasks need to wait longer in the queue. Nevertheless, in all scenarios, the proposed solution always achieves the best performance and can reduce the average delay of learning tasks by 63\%, 47\%, and 63\% compared to the \textit{Greedy}, \textit{OneNode}, and \textit{Static Optimal Code}, as shown in Fig.~\ref{fig:varyComputing}(c) respectively. The reason is that the deep dueling algorithm can learn and adapt with the environment parameters in order to select the optimal MDS code for each learning task as well as avoid unstable wireless links. Among all policies, the \textit{Static Optimal Code} possesses the worst performance as this policy obtains the optimal MDS code (see (\ref{eq:optimalMDS})) without considering the heterogeneity of edge nodes and wireless links. This also confirms out analyses on the drawback of existing static coding mechanisms, i.e., they are only effective under specific scenarios and assumptions. Our proposed solution, otherwise, can learn all the unpredictable parameters of the edge nodes and wireless links to jointly optimize coding and scheduling policies for learning tasks. It is worth noting that the Q-learning algorithm cannot obtain the optimal coding and scheduling policy at $10^6$ iterations as discussed above. Thus, the results obtained by the Q-learning algorithm are not as good as those of the proposed deep dueling algorithm.

\begin{figure*}[!]
	\centering
	\begin{subfigure}[b]{0.3\textwidth}
		\centering
		\includegraphics[scale=0.27]{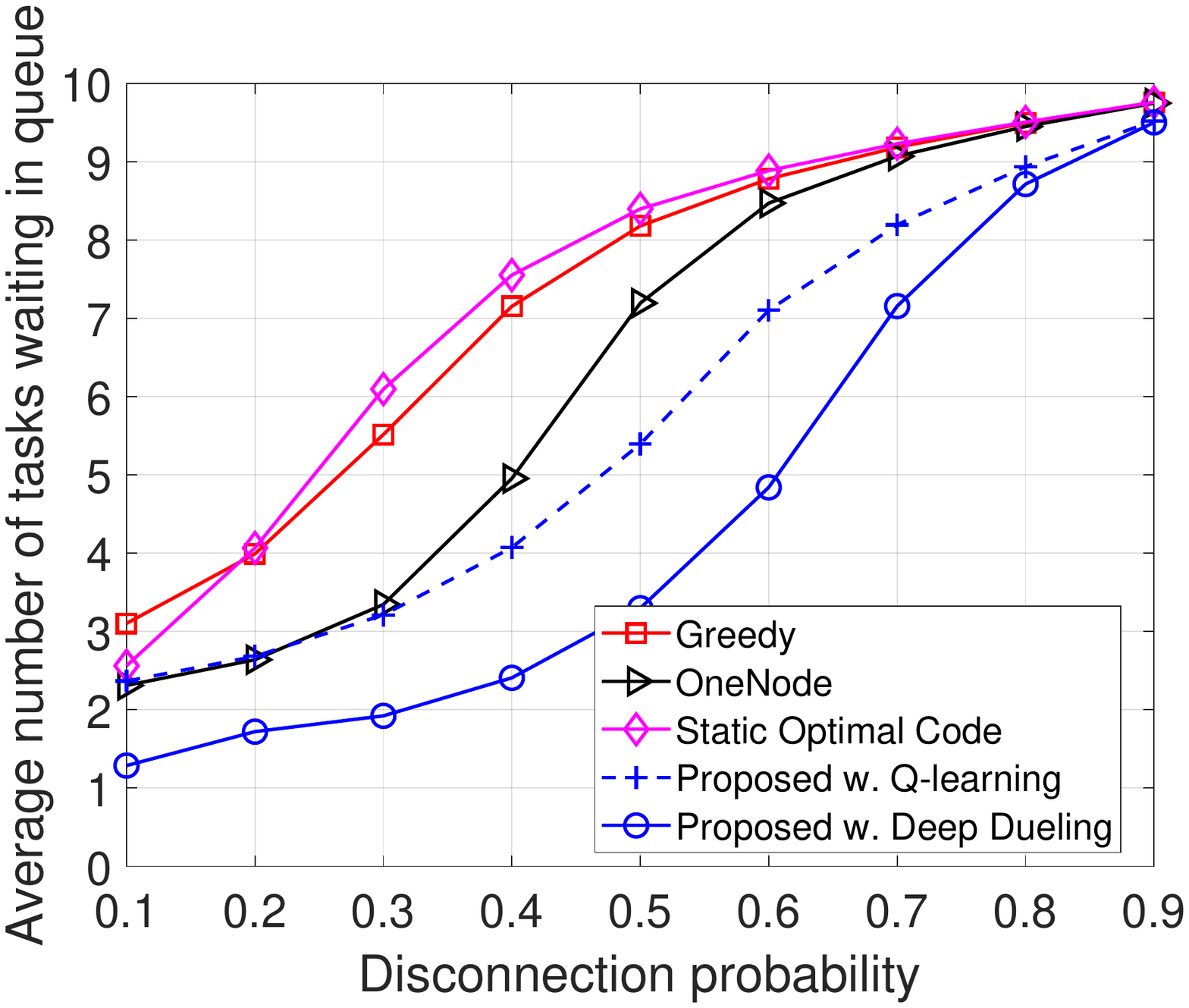}
		\caption{}
	\end{subfigure}%
	~
	\begin{subfigure}[b]{0.3\textwidth}
		\centering
		\includegraphics[scale=0.27]{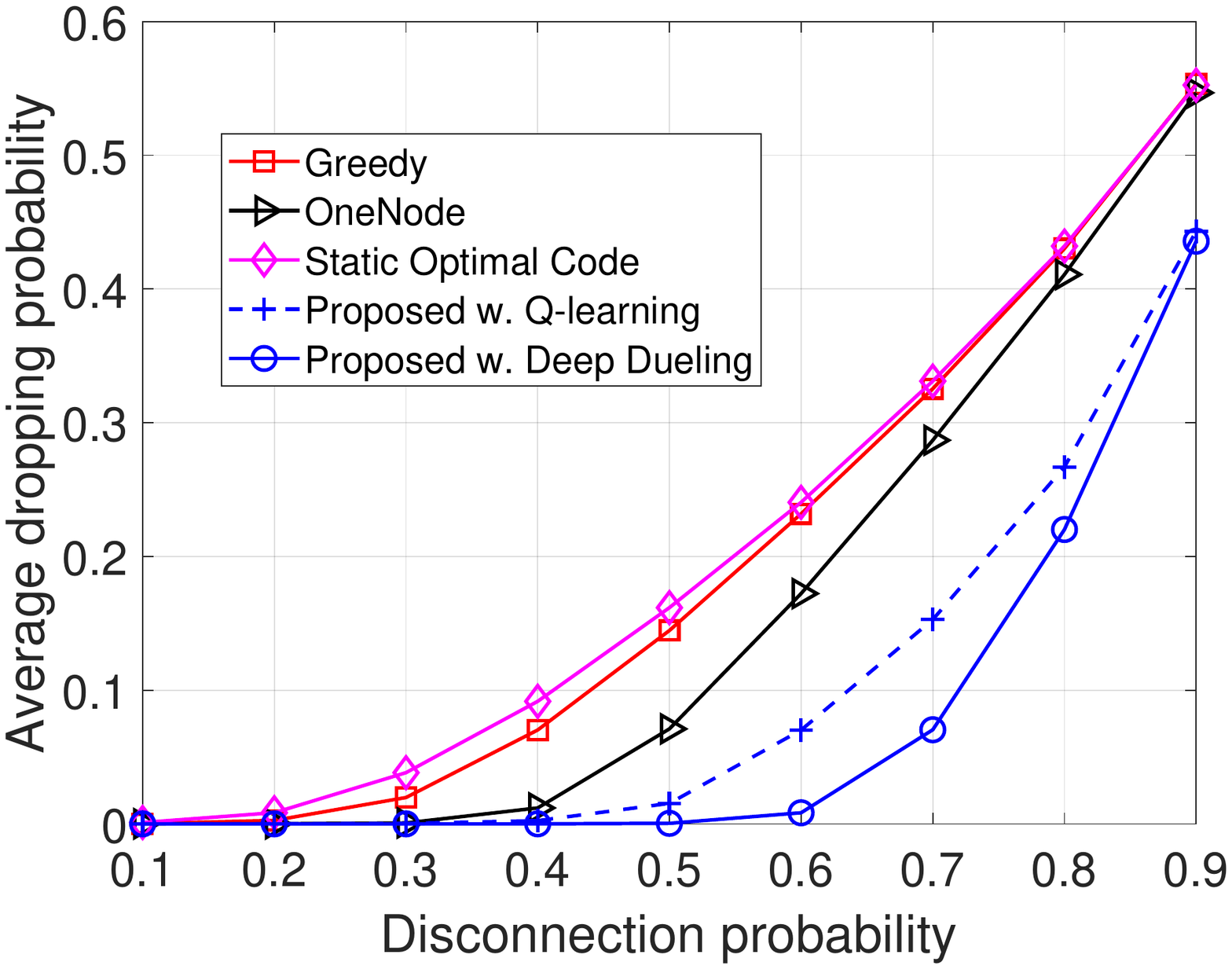}
		\caption{}
	\end{subfigure}%
	~
	\begin{subfigure}[b]{0.3\textwidth}
		\centering
		\includegraphics[scale=0.27]{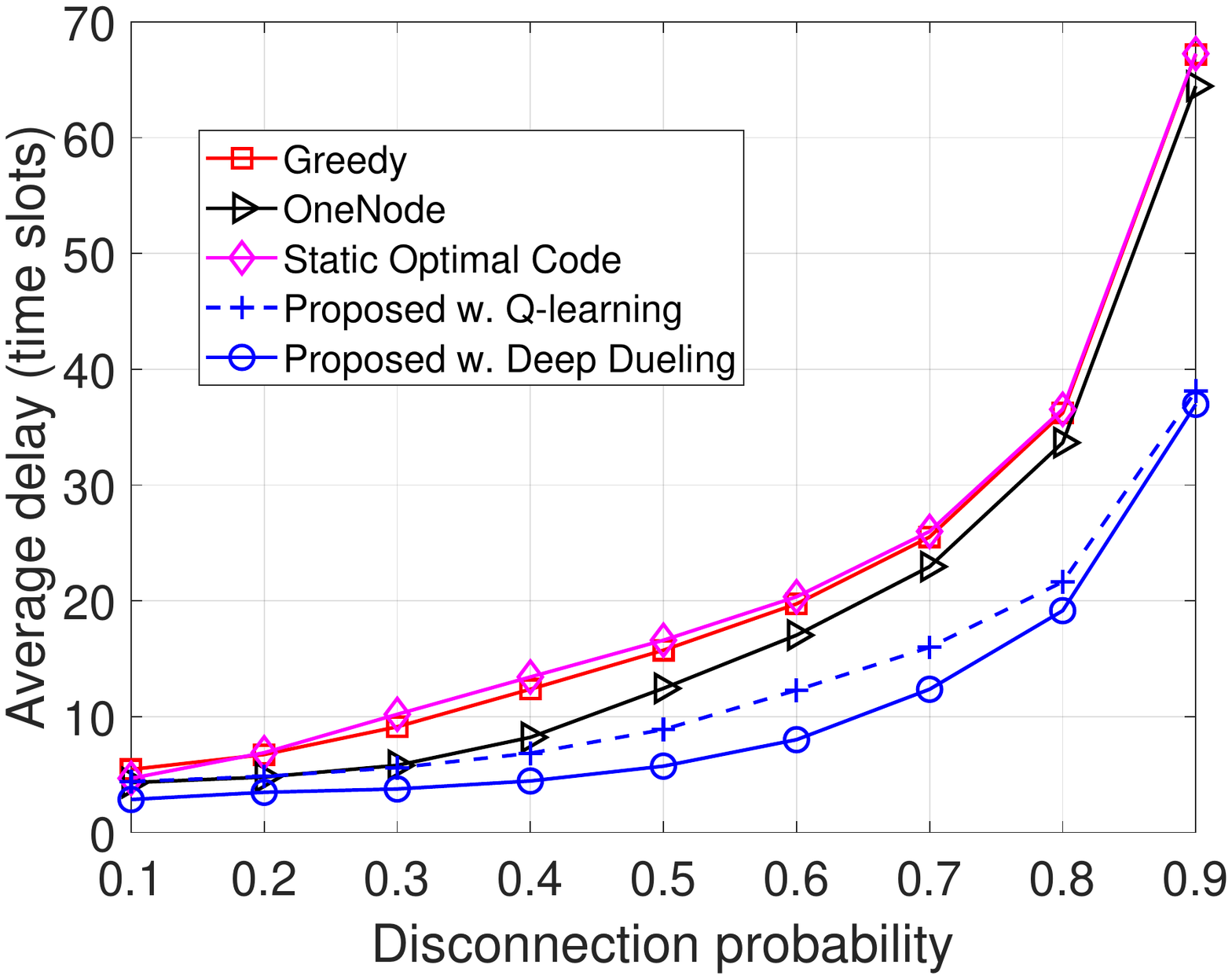}
		\caption{}
	\end{subfigure}%
	\caption{(a) Average number of tasks waiting in the queue, (b) task dropping probability, and (c) average delay of learning tasks in the system vs. disconnection probability of links.}
	\label{fig:varyDisconnect}
\end{figure*}

In Fig.~\ref{fig:varyDisconnect}, we vary the disconnection probability of wireless links and observe the system performance in terms of the number of tasks waiting in the queue, task dropping probability, and the average delay of learning tasks in the system. Clearly, when the disconnection probability increases, system performances obtained by all the policies drop dramatically. This is stemmed from the fact that when the wireless links are more unstable, the MEC server and the edge results need more time to resend the sub-learning task and results, respectively. Consequently, the serving time of learning tasks increases, resulting in a higher delay for learning tasks. It is worth noting that when the disconnection probability increases from $0.1$ to $0.6$, the performance obtained by the \textit{OneNode} policy is much better than that of the \textit{Greedy} policy. The reason is that under the \textit{Greedy} policy, the MEC server and edge nodes need to resend data when the wireless links are disconnected. Differently, under the \textit{OneNode} policy, each learning task is served by only one edge node, and thus the frequency of resending data is lower than that of the \textit{Greedy} policy, resulting in a better performance compared with that of the \textit{Greedy} policy. However, when the disconnection probability is high, i.e., higher $0.7$, the performance gap between these two policies is very small as all links are likely disconnected. Nevertheless, our proposed solution always achieves the best performance in all cases compared to those of the \textit{Greedy} and \textit{OneNode} policies. The reason is that the proposed deep dueling can learn from the environment and avoid choosing highly-straggling edge nodes as well as adapt its optimal coding and scheduling policy when the disconnection probability changes. Again, the \textit{Static Optimal Code} achieves the worst performance as this policy does not account for the effects of unstable wireless links when obtaining the optimal MDS code for each learning task as expressed in (\ref{eq:optimalMDS}).

\begin{figure*}[!]
	\centering
	\begin{subfigure}[b]{0.3\textwidth}
		\centering
		\includegraphics[scale=0.27]{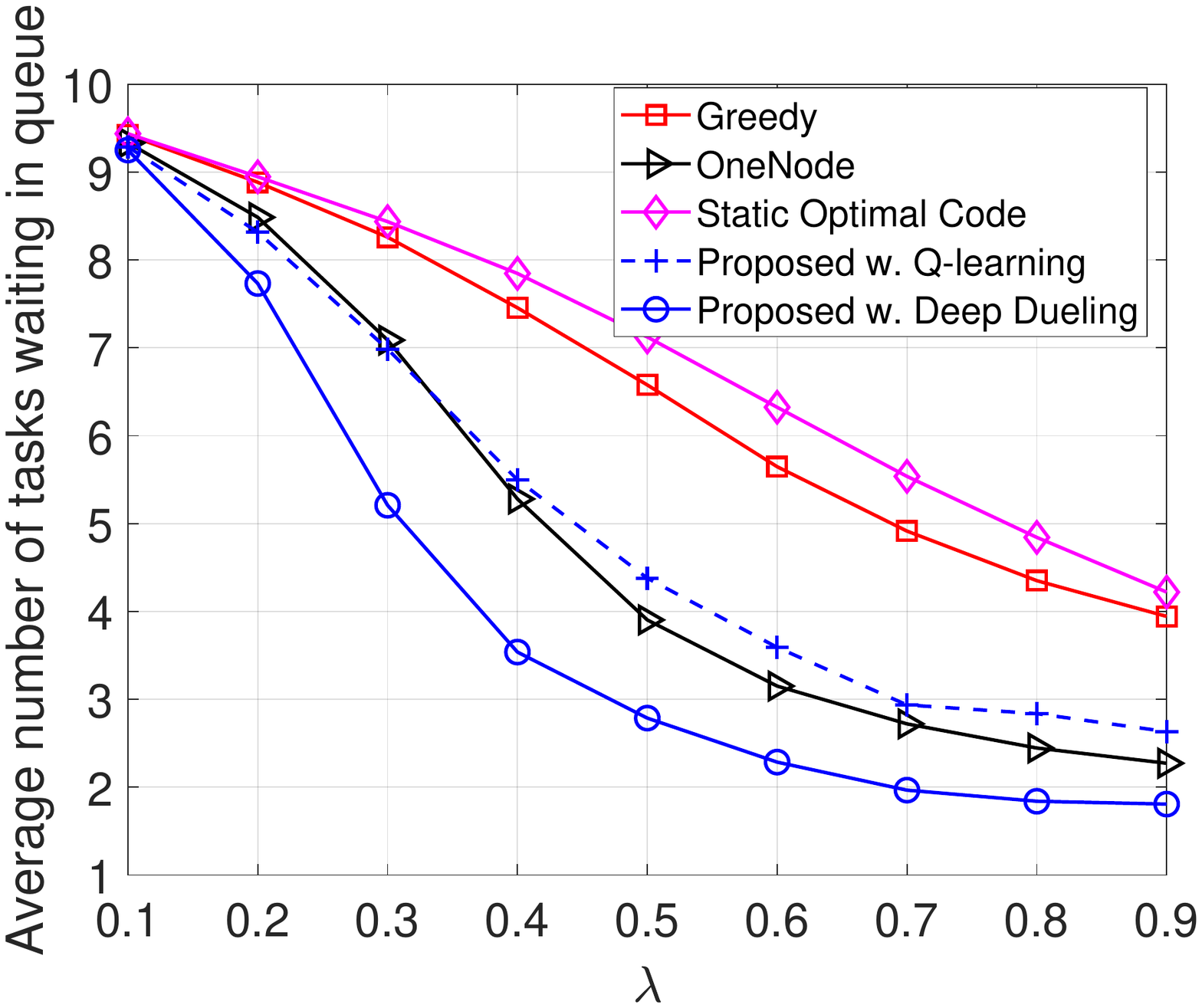}
		\caption{}
	\end{subfigure}%
	~
	\begin{subfigure}[b]{0.3\textwidth}
		\centering
		\includegraphics[scale=0.27]{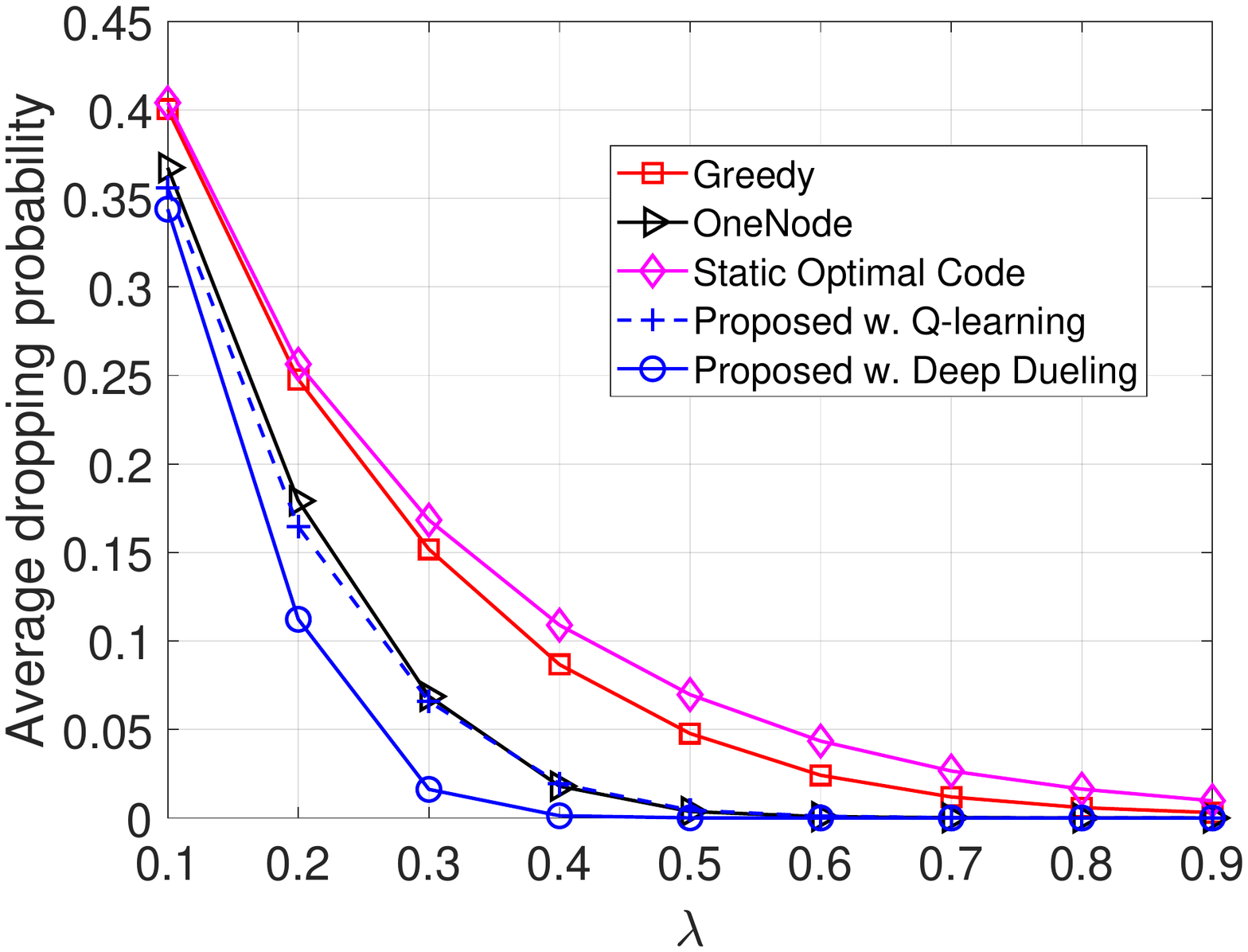}
		\caption{}
	\end{subfigure}%
	~
	\begin{subfigure}[b]{0.3\textwidth}
		\centering
		\includegraphics[scale=0.27]{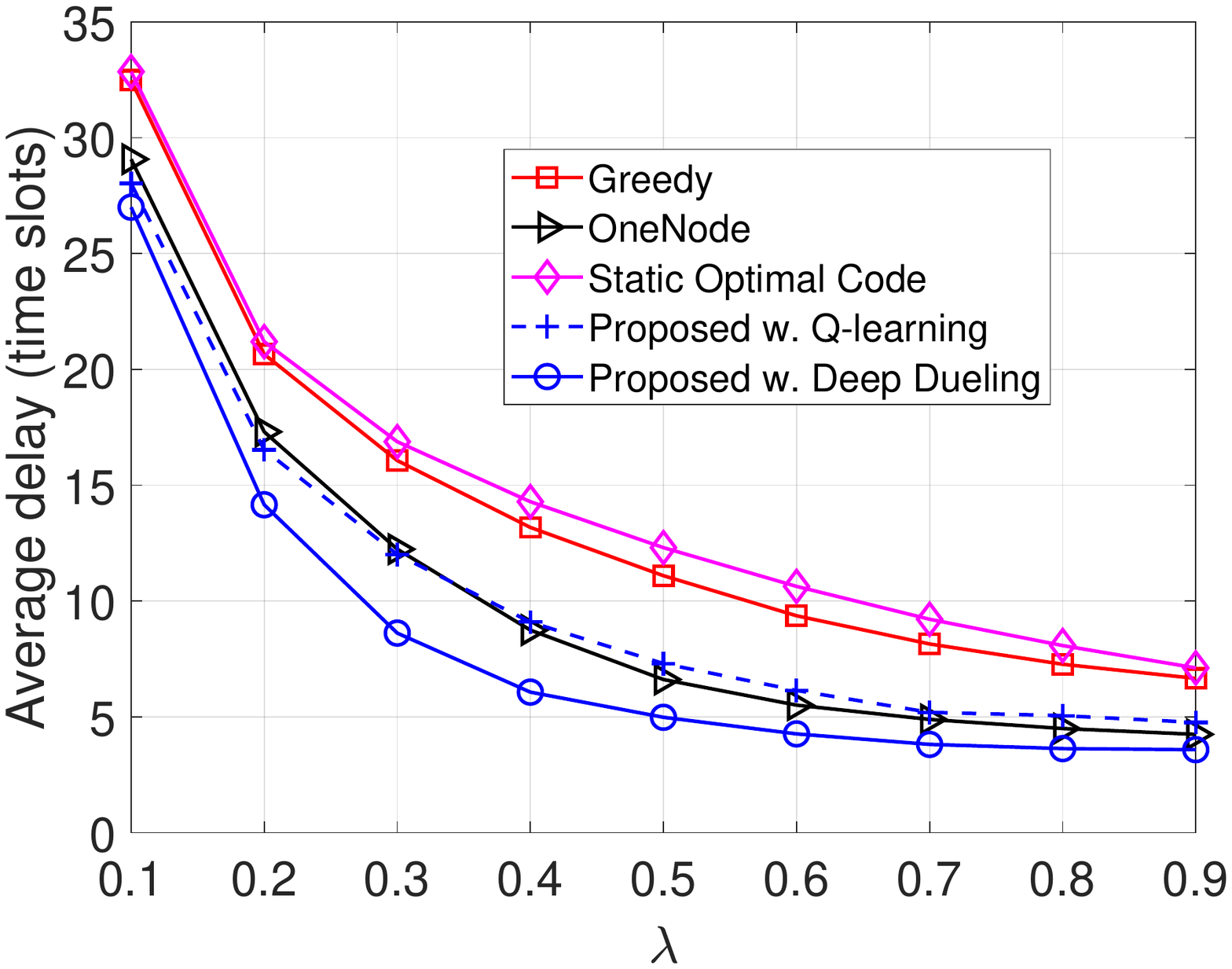}
		\caption{}
	\end{subfigure}%
	\caption{(a) Average number of tasks waiting in the queue, (b) task dropping probability, and (c) average delay of learning tasks in the system vs. the rate parameter $\lambda$ (in the exponential distribution) of the stochastic computing time of edge nodes.}
	\label{fig:varyStraggling}
\end{figure*}

Next, we vary the rate parameter $\lambda$ (in the exponential distribution) of the stochastic computing time of edge nodes and observe the system performance in Fig.~\ref{fig:varyStraggling}. Recall that, a lower value of $\lambda$ result in a longer time for stochastic computing. As such, when $\lambda$ increases, system performances obtained by all the policy will be decreased. Moreover, when $\lambda$ is small, the gap between solutions is small. Nevertheless, when $\lambda$ is increased, the gap will be enlarged. The reason is that, with lower values of $\lambda$, the edge nodes require more time to execute learning tasks. As such, the resources of the system are likely to be fully utilized. In these cases, there are not many options for the MEC server to serve learning tasks, resulting in a small performance gap between solutions. However, in all the scenarios, our proposed solution can achieve the best performance as the optimal policy can avoid unstable wireless links and select the optimal MDS code for each learning tasks.

\begin{figure*}[!]
	\centering
	\begin{subfigure}[b]{0.3\textwidth}
		\centering
		\includegraphics[scale=0.27]{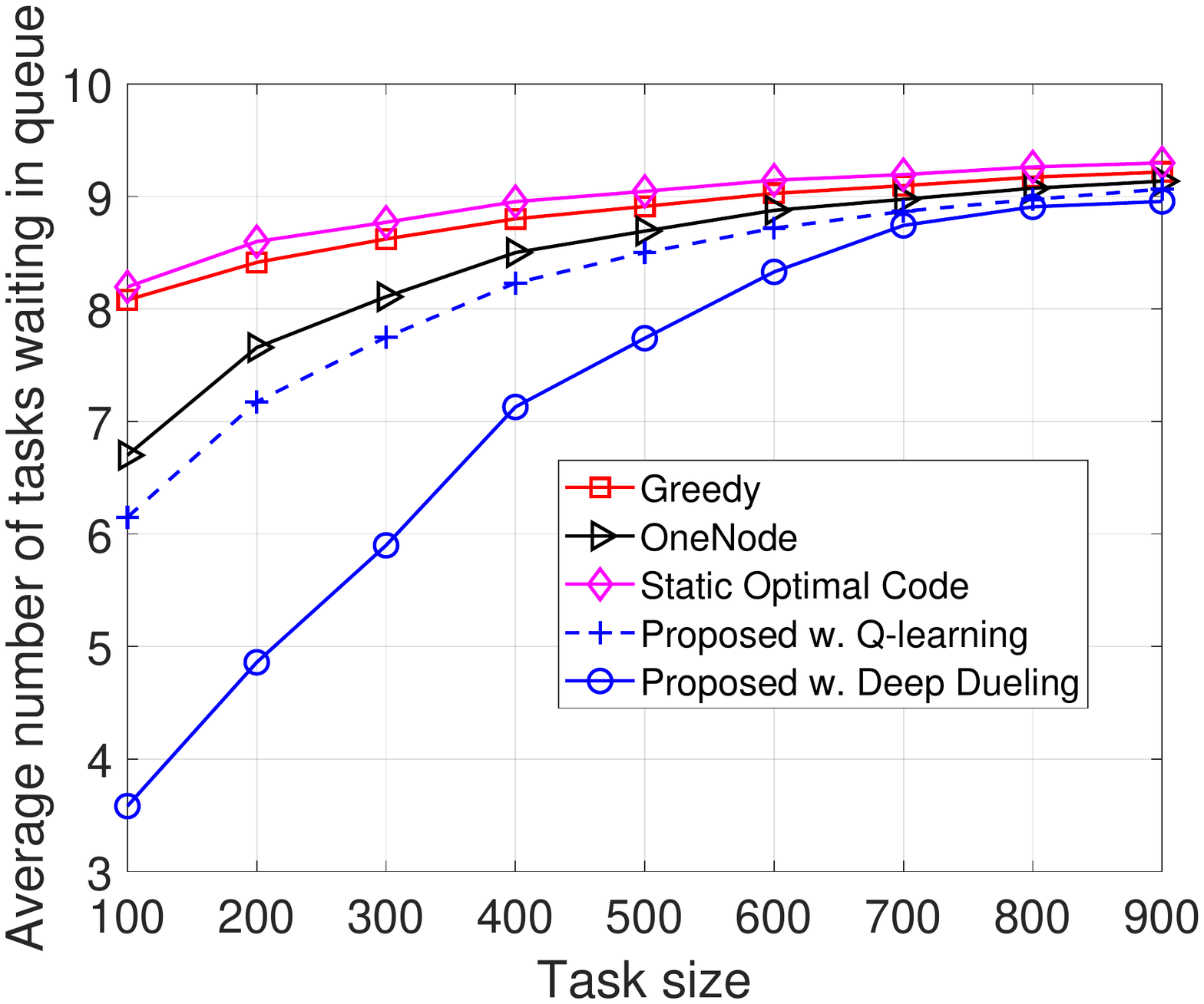}
		\caption{}
	\end{subfigure}%
	~
	\begin{subfigure}[b]{0.3\textwidth}
		\centering
		\includegraphics[scale=0.27]{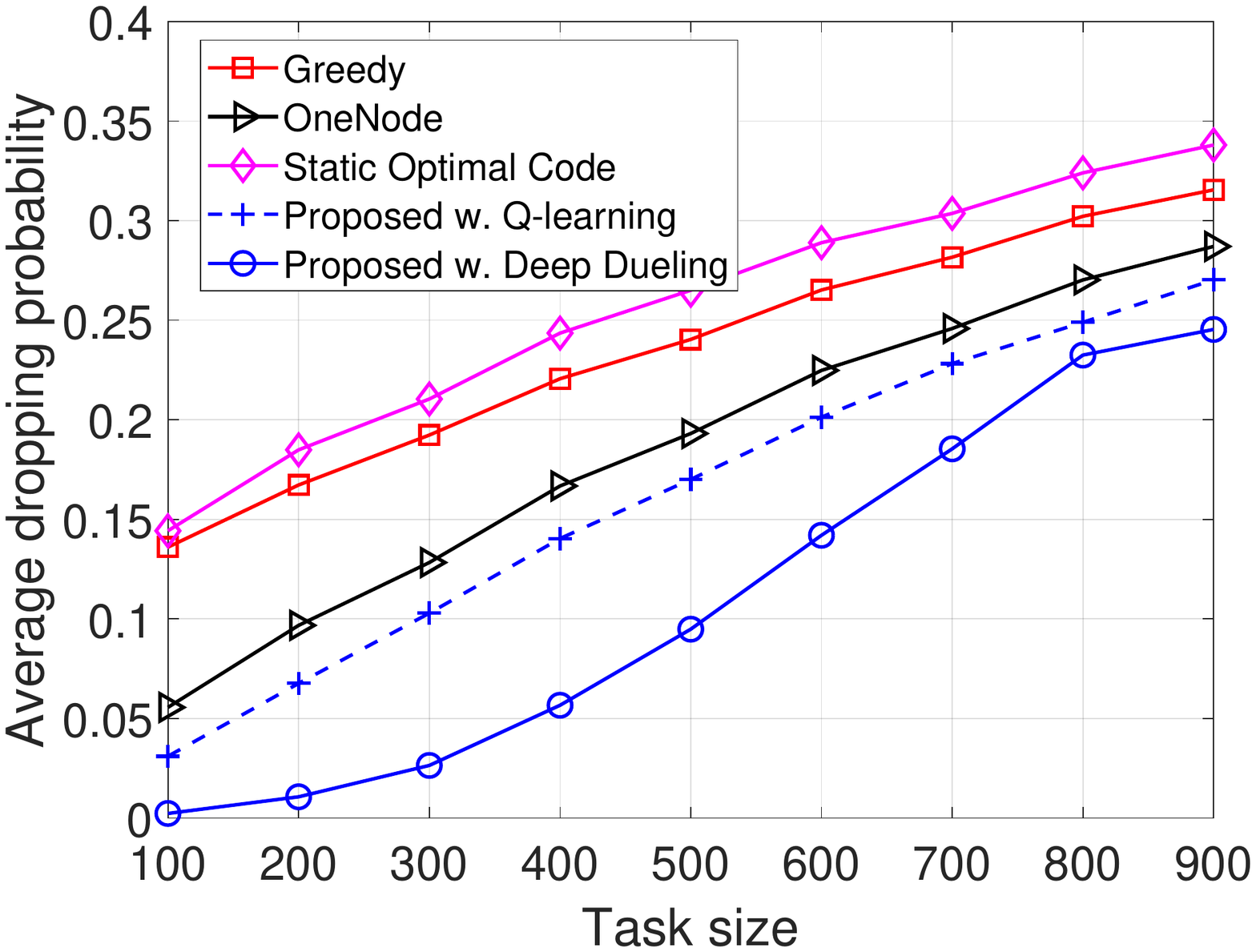}
		\caption{}
	\end{subfigure}%
	~
	\begin{subfigure}[b]{0.3\textwidth}
		\centering
		\includegraphics[scale=0.27]{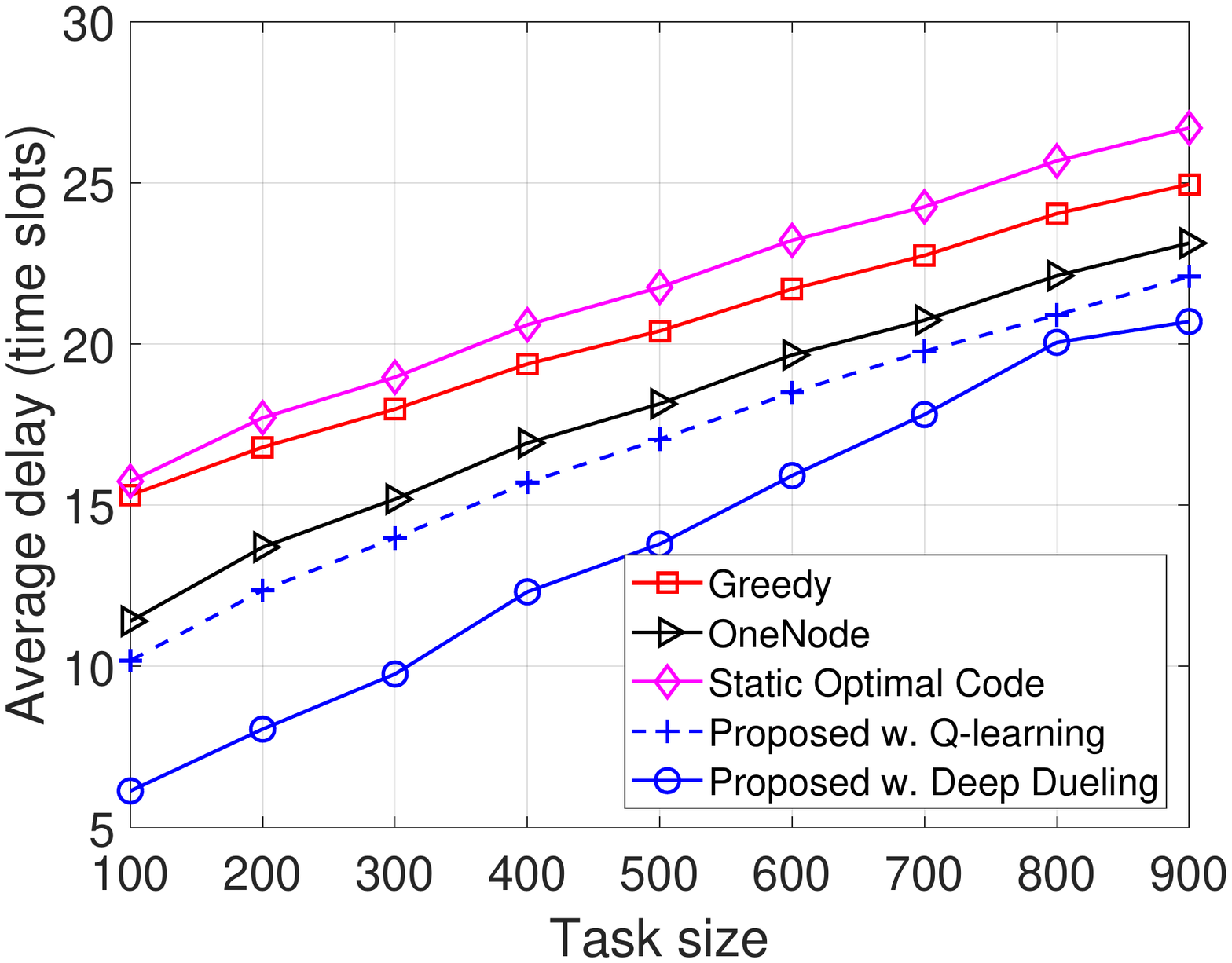}
		\caption{}
	\end{subfigure}%
	\caption{(a) Average number of tasks waiting in the queue, (b) task dropping probability, and (c) average delay of learning tasks in the system vs. task size.}
	\label{fig:varyTasksize}
\end{figure*}

Finally, we vary the data size of learning tasks and observe the system performances under different policies as shown in Fig.~\ref{fig:varyTasksize}. Clearly, when the task size increases, the system performances obtained by all the policies will be dropped. It is stemmed from the fact that with a larger task size, the edge nodes need more time to serve learning tasks. However, our proposed solution can always achieve the best performance compared to other policies because it can learn and select the best MDS code as well as the best edge nodes for each learning task. For example, when the task size is small, the deep dueling algorithm can select a small number of devices with more stable wireless links to serve a learning task. In contrast, when the task size is large, more edge nodes will be selected to reduce the average serving time of learning tasks. It can be observed that our proposed solution can reduce the average number of tasks waiting in the queue by 60\%, 46\%, and 61\% compare to those of the \textit{Greedy}, \textit{OneNode}, and \textit{Static Optimal Code} policies, respectively. Again, the \textit{Static Optimal Code} achieves the worst performance as this policy does not consider the learning task size and the effects of unstable wireless links.
\section{Conclusion}
\label{sec:conclusion}
In this paper, we have proposed a novel framework which can effectively address key challenges for the development of distributed learning in wireless edge networks. Specifically, we have first introduced a distributed learning model utilizing the recent advances in coded computing to mitigate the straggling problems on both the wireless links and the edge nodes. With the proposed distributed learning model, a learning task is first encoded into sub-learning tasks, and the sub-learning tasks are then transmitted to edge nodes for executing. This solution allows to significantly mitigate straggling problems caused by straggling edge nodes as well as unstable links between the MEC server and edge nodes. Furthermore, to deal with the dynamics and uncertainty of wireless links and straggling edge nodes, we have proposed a novel deep reinforcement learning, called deep dueling, to obtain the optimal code and scheduling policy for each learning task. Extensive simulation results have then demonstrated that our proposed solution can significantly improve the system performance by not only obtaining the optimal MDS code but also finding the best edge nodes to serve each learning task. One of the potential research directions from this work is to deploy multiple virtual machines at each edge node to serve various learning tasks simultaneously.
\appendices
\section{The proof of Theorem~\ref{theo:limitexists}}
\label{appendix:limitexist}
We first prove that the underlying Markov chain in this paper is irreducible. In other words, from any state, the process can always move to any other states after a finite number of steps. Recall that the system state is defined as the state of the queue $q$, the task size $f$, and the state of all edge nodes in the system $\{e_1, \ldots, e_j, \ldots, e_N\}$. At each time slot, a learning task arrives at the system with probability $\mu$. Thus, there always exists a probability that the queue state moves from $q$ to $q'= q + 1$. Moreover, a learning task will be removed from the queue if it is successfully served. In this case, the queue state moves from $q$ to $q'= q-1$. The task size is a random value. As such, it can take any positive values. Alternatively, edge node $E_j$ is available (i.e., $e_j = 1$) when it does not serve any learning task. In contrast, edge node is unavailable (i.e., $e_j = 0$) if it is serving another learning task. As a result, edge nodes can always move from the available state to unavailable state. Thus, the underlying Markov chain can move from a given state to any other states after a finite number of steps. As such, the average long-term reward $\mathcal{R}(\pi)$ is well defined and does not depend on the initial state for every $\pi$~\cite{filar2012competitive}.

\end{document}